\documentclass[lettersize,journal]{IEEEtran}

\usepackage{amsmath,amssymb,amsfonts}
\usepackage{graphicx}
\usepackage{cite}
\usepackage{amsthm}
\usepackage[ruled,vlined]{algorithm2e}
\usepackage{algpseudocode}
\usepackage{xcolor}
\usepackage{soul}
\usepackage{tikz}
\usetikzlibrary{shapes.geometric, arrows.meta, positioning, calc}
\usepackage{algpseudocode}
\usepackage{booktabs}
\usepackage{tabularx}
\usepackage{makecell}
\usepackage{ragged2e}
\usepackage{array}
\usepackage{siunitx}
\usepackage{subfigure}
\usepackage{ragged2e}

\usepackage{caption}

\usepackage{fontawesome5} 
\usepackage{bm} 
\usepackage[hidelinks]{hyperref} 

\newcommand{\vect}[1]{\bm{#1}}

\newtheorem{theorem}{Theorem}

\newtheorem{assumption}{Assumption}

\theoremstyle{definition}

\newtheorem{rem}{Remark}

\algrenewcommand\algorithmicrequire{\textbf{Input:}}
\algrenewcommand\algorithmicensure{\textbf{Output:}}
\newcolumntype{L}[1]{>{\RaggedRight\arraybackslash}p{#1}}
\newcolumntype{Y}{>{\RaggedRight\arraybackslash}X}

\begin{document}

\title{EMS-FL: Federated Tuning of Mixture-of-Experts in Satellite-Terrestrial Networks via Expert-Driven Model Splitting}

\author{Angzi~Xu,
        Zezhong~Zhang,~\IEEEmembership{Member, IEEE},
        Zhi~Liu
        and Shuguang~Cui,~\IEEEmembership{Fellow,~IEEE}
}


\maketitle

\begin{abstract}
The rapid advancement of large AI models imposes stringent demands on data volume and computational resources.  Federated learning, though designed to exploit distributed data and computational resources, faces data shortage from limited network coverage and computational constraints from edge devices. To address these issues, both the mixture-of-experts (MoE) and satellite-terrestrial network (STN) provide promising solutions, offering lightweight computation overhead and broad coverage, respectively. However, the satellite-ground relative motion results in intermittent connectivity, hindering conventional federated learning that relies on model synchronization across devices. 
To leverage the coverage of STN while preserving training efficiency, we propose EMS-FL, an expert-driven model splitting and federated learning method. EMS-FL assigns each device cluster only the experts highly correlated to their local data. Through non-overlapping expert assignments, asynchronous local learning is further proposed, where each device cluster trains its assigned experts consecutively and only uploads local parameters to the satellite during connected phases for aggregation and model updates. Consequently, EMS-FL effectively reduces the training overhead and achieves both faster convergence and higher accuracy compared with conventional federated learning. Rigorous convergence analysis is provided to theoretically characterize the learning performance. Furthermore, comprehensive experiments are conducted using public datasets and large models, validating the superiority of EMS-FL.

\end{abstract}

\begin{IEEEkeywords}
MoE, large models, satellite-terrestrial network, LEO satellites, federated learning.
\end{IEEEkeywords}

\section{Introduction}\label{sec:introduction}
The rapid advancement of artificial intelligence (AI) has driven an explosive growth of intelligent applications, which are increasingly being deployed at the network edge with the development of edge mobile computing and edge learning \cite{Mao2017MEC,Letaief2022EdgeAI6G}. In recent years, the emergence of large-scale models has once again reshaped the industry landscape, owing to their remarkable capabilities and potential in handling tasks across a wide range of domains \cite{Brown2020GPT3,OpenAI2023GPT4,Chowdhery2023PaLM}. However, as AI models continue to grow in size, it raises much more stringent demands on data and computation resources for model training and tuning. For instance, the training process of Llama-3 involves approximately 15 trillion tokens and $10^{25}$ Flops of computation capacity, which requires about one year of training on ten thousand A100 GPUs \cite{Meta2024Llama3,Chu2025Llama3ISCA}. 
Currently, large models are typically trained and tuned in a centralized manner, with data aggregated on high-performance GPU servers. However, as distributed data at the network edge grow exponentially, centralized approaches are no longer able to fully exploit the data due to privacy constraints, leading to model obsolescence in turn. Moreover, centralized training also fails to harness the computational resources distributed across edge devices. To address these limitations, federated learning (FL) is considered a favorable solution, which has been widely adopted for collaborative AI model training in distributed wireless networks \cite{McMahan2017FedAvg,Chen2021JointLearningCommFL,Zhu2021OneBitOTA}. By exchanging model parameters or gradients instead of uploading the raw data samples, FL enables distributed training without compromising data privacy. Nevertheless, when applied to large models, FL faces two practical challenges as described in the following. 

The first challenge is the limited local computation and communication resources. Since most mobile devices cannot handle even a single round training of large models, such a practical issue has hindered distributed large model training for years. Under such a circumstance, the emergence of mixture-of-experts (MoE), a specialized paradigm of large models, opens the door to practical distributed training \cite{Fedus2022Switch,Du2022GLaM}
. In MoE, each dense feed-forward layer in conventional large models is replaced with multiple expert networks, among which only a sparse subset is activated for each data sample \cite{Shazeer2017MoE,Fedus2022Switch}. This design enables lightweight training and inference, thereby stimulating the demand for distributed MoE training\cite{Xue2024WDMoE,Chen2025SlimCaching}. Another challenge is the limited coverage of cellular networks. Training large models requires data from a vast number of devices, while the architecture of current cellular networks is unable to support such massive access. Although hierarchical FL has been proposed to involve more devices \cite{You2023HierPersonalizedTWC,Aygun2024HierClusteringTWC}, the architecture grows increasingly complex as the number of participants rises, leading to high latency and heavy communication overhead. To address this issue, the satellite-terrestrial network (STN) offers a promising solution. Satellites inherently provide much broader coverage than terrestrial networks. Furthermore, advancements in Low-Earth-Orbit (LEO) and Very-Low-Earth-Orbit (VLEO) satellites have enabled direct connectivity between mobile devices and satellites \cite{Ntontin2025Space6GProcIEEE,Le2025RandomAccessCOMST}. Given these advantages, STNs become a highly suitable scenario for distributed training of large models, with satellites serving as the access points for model aggregation and coordination.

According to analysis above, federated MoE training over STNs represents a promising paradigm for distributed training of large models. However, despite its notable advantages, this paradigm still faces a key challenge, namely the intermittent connectivity caused by relative motion between satellites and ground, which is an inherent characteristic of LEO constellations \cite{Matthiesen2024SatFLNetwork,Ntontin2025Space6GProcIEEE}. Conventional FL relies on consecutive and synchronous parameter exchange between the server and devices so that stable and persistent connectivity is necessary \cite{McMahan2017FedAvg}. However, in STNs, the high-speed relative motion between LEO satellites and ground devices prevents each satellite from maintaining a sustained connection with a fixed set of devices, which thereby disables conventional FL. To tackle this issue, recent research focuses on exploiting Inter Satellite Links (ISLs) and the Space-Air-Ground Integrated Network (SAGIN) \cite{Razmi2024OnboardISLTCOM,Fang2023OliveBranchTWC, Shi2024SatFEEL,Zhai2024FedLEO, Huang2024HFL_SAGIN}. 
Specifically, existing ISL-enabled solutions are typically targeted for on-board FL across LEO satellites, where the data, such as remote sensing and hyperspectral images, are collected and stored at satellites. In such architectures, a terrestrial ground station may serve as the central server for model aggregation, or even no central server is required \cite{Shi2024SatFEEL}. Although ISL-enabled solutions are effective for on-board FL across satellites, they do not account for the scenario where data are distributed at ground devices and thereby infeasible\cite{Razmi2024OnboardISLTCOM}. 
On the contrary, the SAGIN-based approaches aim to enhance FL across ground devices by exploiting the air nodes (e.g., unmanned aerial vehicles, airships, balloons, etc.) along with satellites for model aggregation\cite{Fang2023OliveBranchTWC, Huang2024HFL_SAGIN}. For instance, the authors in \cite{Fang2023OliveBranchTWC} propose to exploit the topology of SAGIN for hierarchical FL, where the Ring-All-Reduce algorithm is applied to enable fast inter-satellite parameter exchange.
However, these approaches require model synchronization and usually involve topology optimization to enable effective and efficient model aggregation in such a hierarchical architecture. Due to the intricate topology of SAGIN, parameter exchange of large models can result in severe congestion and high communication latency. Therefore, the existing methods based on ISLs and SAGIN are still inadequate for federated MoE training given the model scale.

Moreover, though ISLs and SAGIN can assist in model aggregation and update, such approaches usually involve multiple, or even all, satellites in training a single AI model. 
This could result in significant resource contention and network congestion once concurrent training of multiple large models is demanded, which is a foreseeable trend in the near future. Therefore, enabling a single satellite to manage the training of one or multiple AI models remains a critical requirement. 
To this end, we further dive deeply into the characteristics of the MoE model and FL framework. In fact, aside from a limited number of shared experts, most experts in an MoE model are specialized and will only be activated for their corresponding data modalities \cite{Shazeer2017MoE,Fedus2022Switch}. Moreover, each device, or a cluster of devices, located within a given area, typically owns only a narrow range of data types, such as financial language data in banking scenarios or obstacle images from vehicles on roads \cite{Guo2020PFLMoE}. Consequently, local training only activates a small subset of experts at each device or device cluster \cite{Fedus2022Switch,Lepikhin2021GShard,Du2022GLaM}. 
In view of the above facts, we first propose an expert-driven model splitting algorithm that assigns each device cluster the experts highly correlated to their local data. Constraints are also introduced to ensure non-overlapping expert assignments across all device clusters, based on which asynchronous local training is further proposed to enable continuous local expert training and updates, regardless of device connectivity to the satellite. This approach exploits the intrinsic parallelism of MoE models, where experts operate independently and can thus be trained in parallel by different device clusters, fully leveraging distributed computational resources. We call the proposed integration of \emph{expert-driven model splitting and asynchronous federated learning} EMS-FL. The main contributions are summarized as follows.

\begin{itemize}
    \item \textbf{STN-Assisted FL for Large Models:} Although FL has become prominent for distributed training of AI models, the rapid expansion in model scale is making data scarcity the critical bottleneck. To tackle this issue, we first introduce a STN-assisted FL framework where a LEO satellite serves as the access point, facilitating massive access and model aggregation. Such a framework leverages the broad coverage of LEO satellites while suffering from its inherent intermittent connectivity, representing the two facets of both the opportunity and the challenge. To the best of our knowledge, this work presents the first systematic investigation of distributed large model training under STN scenarios and proposes an efficient solution EMS-FL, which effectively mitigates the impact of intermittent links while capitalizing on the broad coverage. 
    
    \item \textbf{Lightweight Expert-Driven Model Splitting:} In conventional FL, every participating device is required to load the entire AI model for training, which turns out to be infeasible in large model training given limited local resources. Inspired by the observation that MoE only activates a small subset of experts for a given data modality, we propose a non-overlapping expert-driven model-splitting algorithm where each device cluster is assigned the highly correlated experts and only focuses on their training and updating. In this way, gradients of unassigned experts and shared model components can be discarded during training. Moreover, a masked gating mechanism is further proposed in enhanced EMS-FL, which directs inputs only to the assigned experts during local training, eliminating the need to load the unassigned experts into memory. As a result, the proposed model-splitting algorithm significantly reduces the resource consumption, enabling lightweight and practical local training on resource-constrained devices.    

    \item \textbf{High-Efficiency Training under Intermittent Links:} Due to the model synchronization requirement in conventional FL that each training round begins with a shared global model, the intermittent connectivity in STN hinders the local training at disconnected devices, severely degrading the training efficiency \cite{Matthiesen2024SatFLNetwork}. To address this issue, EMS-FL further proposes asynchronous and non-overlapping local expert training across device clusters, regardless of device connectivity to the satellite, fully exploiting the distributed computational resources and thereby significantly enhancing the training efficiency. Moreover, mathematical analysis is provided to quantify the convergence rate of the proposed EMS-FL, theoretically demonstrating its superiority over a conventional FL baseline. Comprehensive simulations are also conducted using public datasets and large models to verify the effectiveness and advantages of EMS-FL.
\end{itemize}

The reminder of this paper is organized as follows. The system model is described  in Section \ref{system-model}.  Section \ref{sec:EMS-FL} presents the proposed EMS-FL design. In Section \ref{sec:convergence}, the convergence performance of EMS-FL is derived. Experimental results are given in Section \ref{sec:sim_results}, followed by conclusions in Section \ref{sec:conclusion}.

\section{System Model}\label{system-model}
We consider an FL task for large model tuning in an STN comprising numerous ground devices and an LEO satellite. The objective is to fine-tune an MoE model by exploiting data distributed across all devices.  In the following, we present both the learning and satellite-ground communication models.



\subsection{Learning Model}\label{learning-model}
In this work, we consider FL under the coordination of a single server, i.e., an LEO satellite, as shown in Fig. \ref{systemmodel_fig}. The involved devices are naturally partitioned into $C$ clusters, each with $J$ devices, based on their geographic locations, denoted as $\{\mathcal{U}_1, \dots, \mathcal{U}_C\}$. The cluster indices form a set $\mathcal{C} = \{1,\dots,C\}$. Without loss of generality, we assume that the satellite stays in high-rate communication with only one device cluster within its instantaneous coverage at each time slot $t$, given as $\mathcal{U}(t)$ with $\mathcal{U}(t) = \mathcal{U}_c$ if $c = t - \lfloor \frac{t}{C} \rfloor C$, where function $\lfloor x \rfloor$ gives the greatest integer no larger than $x$. The local dataset at device $j$ from cluster $c$ is denoted as $\mathcal D_{c,j}$, containing $N$ data samples. The aggregation of all local datasets constitutes a global dataset $\mathcal{D} = \bigcup\limits_{c,j} \mathcal{D}_{c,j}$.
In the following, we present the FL framework and its adaptation for MoE fine-tuning.

\subsubsection{Federated Learning}\label{fl-framework}
Federated learning is a method for training an AI model with parameters $\boldsymbol{\theta}$ by exploiting data samples distributed across multiple devices.
The local loss function at device ${j}$ from cluster $c$ is defined as
\begin{align}\label{eq:local-loss}
F_{c,j}(\vect\theta) \triangleq \frac{1}{N} \sum_{(\boldsymbol{x}_n,y_n)\in \mathcal{D}_{c,j}} f(\boldsymbol{\theta};\boldsymbol{x}_n,y_n),
\end{align}
where $f(\boldsymbol{\theta};\boldsymbol{x}_n,y_n)$ denotes the sample-wise loss, and $\boldsymbol{x}_n$ and $y_n$ denote the feature vector and label of data sample $n$, respectively. For convenience, we rewrite $f(\boldsymbol{\theta};\boldsymbol{x}_n,y_n)$ as $f_n(\boldsymbol{\theta})$. Then, the global loss function is defined as 
\begin{align}\label{eq:global-loss}
F(\vect\theta) &\triangleq \frac{1}{CJN}
    \sum_{c=1}^C \sum_{j\in\mathcal U_c} \sum_{n\in \mathcal{D}_{c,j}} f_n(\boldsymbol{\theta}).
\end{align} 
Combining the definitions in \eqref{eq:local-loss} and \eqref{eq:global-loss} yields the relation
\begin{align}
    F(\vect\theta) =   \frac{1}{CJ}
    \sum_{c=1}^C \sum_{j\in\mathcal U_c} F_{c,j}(\vect\theta).
\end{align}
The FL process comprises multiple communication rounds. Without loss of generality, we consider the model aggregation case, where each round $t$ involves the four procedures below.
\begin{itemize}
    \item \textbf{Model broadcast}: The server first broadcasts the latest parameters $\boldsymbol{\theta}_t$ to all devices. 
    \item \textbf{Local training and model update}: After receiving the current parameters $\boldsymbol{\theta}_t$, each device $j$ from cluster $c$ computes a local gradient of the loss function in \eqref{eq:local-loss} using its local dataset $\mathcal{D}_{c,j}$. The local gradient is expressed as
    \begin{align}\label{eq:local-grad}
    \mathbf{g}_{t,c,j} = \frac{1}{N} \sum\limits_{n\in \mathcal{D}_{c,j}}  \nabla_{\boldsymbol{\theta}_t} f_n(\boldsymbol{\theta}_t),
    \end{align}
    After local gradient computation, the local model can be updated by applying gradient descent, given as 
    \begin{align}\label{local-update}
        \boldsymbol{\theta}_{t+1,c,j} = \boldsymbol{\theta}_{t} - \eta \mathbf{g}_{t,c,j},
    \end{align}
    where $\eta>0$ denotes the step-size.
    \item \textbf{Model uploading and aggregation}: Suppose reliable communication can be achieved in the model uploading phase. Then the global model update can be obtained through aggregation at the server, given as
    \begin{align}\label{model-aggregation}
        \boldsymbol{\theta}_{t+1} = \frac{1}{CJ}
    \sum_{c=1}^C \sum_{j\in\mathcal U_c} \boldsymbol{\theta}_{t+1,c,j}.
    \end{align}
This finishes a single round training process.
\end{itemize}
With reliable communication between the server and devices, the FL method is capable of achieving equivalent training performance compared to centralized training \cite{McMahan2017FedAvg}. However, with the emergence of large models, the substantial computational and communication demands usually exceed the capabilities of distributed devices. The MoE models emerged recently provide a flexible solution based on a sparse activation strategy, wherein only a small fraction of experts are engaged and trained per round. The MoE model and its tuning process under the FL framework are described in the following.

\begin{figure}[t]
	\centering
	\includegraphics[scale=0.54]{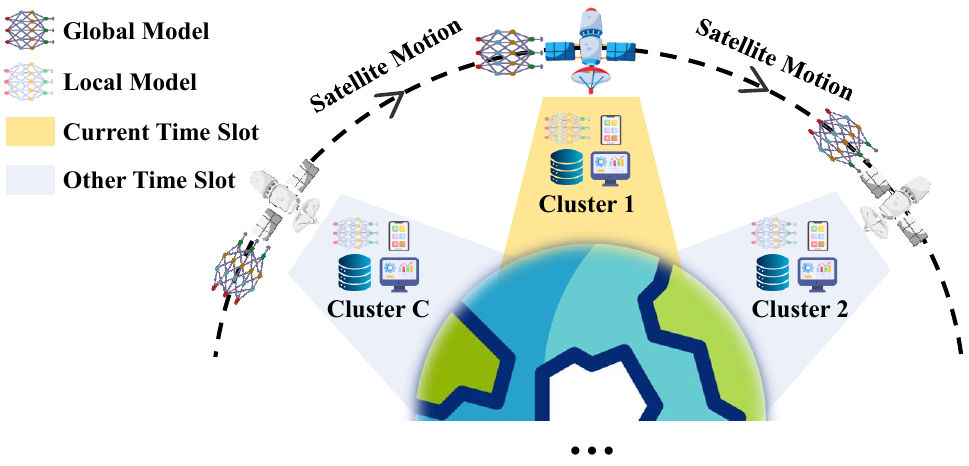}
	\caption{Illustration of large model tuning via FL under STN.}\label{systemmodel_fig}
\end{figure}

\begin{figure}[t]
  \centering
  \includegraphics[height=5cm]{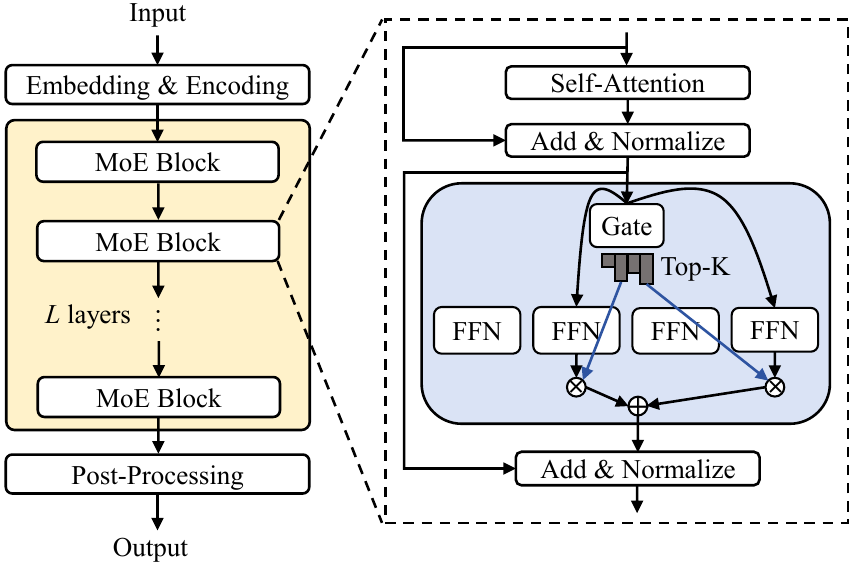}
  \caption{The architecture of transformer-based MoE\cite{Xue2024WDMoE}.}
  \label{MoE_fig}
\end{figure}

\subsubsection{MoE Fine-tuning via FL} 
An MoE model $Q$ is composed of $L$ MoE layers, where each layer $\ell$ consists of a gating network $\boldsymbol{u}_\ell$, a set of $M_\ell$ expert networks $\{\boldsymbol{e}_{\ell,m}\}_{m=1}^{M_\ell}$, and a shared backbone, as shown in Fig. \ref{MoE_fig}. In MoE fine-tuning, only the expert networks and the gating network are trainable.
Therefore, we omit the backbone parameters for simplicity. Given the input of MoE layer $\ell$ as $\boldsymbol{x}_\ell$, it first passes the gating network $\boldsymbol{u}_\ell$, outputting a vector $\mathcal{G}_\ell(\boldsymbol{u}_\ell;\boldsymbol{x}_\ell)$ with
\begin{align}
    \mathcal{G}_\ell(\boldsymbol{u}_\ell;\boldsymbol{x}_\ell)_m&=\operatorname{softmax}(g_\ell(\boldsymbol{u}_\ell;\boldsymbol{x}_\ell))_m,
\end{align}
where $g_\ell(\boldsymbol{u}_\ell;\boldsymbol{x}_\ell)$ represents the result before the softmax operation.
The input vector $\boldsymbol{x}_\ell$ is also fed into every expert $m$, producing $q_{\ell,m}(\boldsymbol{e}_{\ell,m}; \boldsymbol{x}_\ell)$. Then the output of layer $\ell$ is
\begin{align}\label{raw_moe}
    \!\!\!Q_{\!\ell}(\!\boldsymbol{e}_{\ell,1}, \dots, \boldsymbol{e}_{\ell,M_\ell}, \boldsymbol{u}_\ell;\boldsymbol{x}_\ell\!)\!=\!\!\sum_{m\!=\!1}^{M_\ell} \!q_{\ell,m}(\!\boldsymbol{e}_{\ell,m};\boldsymbol{x}_\ell\!) \mathcal{G}_\ell(\!\boldsymbol{u}_\ell;\boldsymbol{x}_\ell\!)_m .
\end{align}
To enable efficient training and inference while maintaining learning performance, model sparsification is commonly applied to each MoE layer, resulting in a \emph{sparse MoE} architecture \cite{Cai2025SurveyMoE}. In this setup, a Top-K routing strategy is employed at each gating network, given as
\begin{align}    \mathcal{G}_\ell(\boldsymbol{u}_\ell;\boldsymbol{x}_\ell)_m=\operatorname{softmax}\left(\operatorname{Top}\left(g_\ell(\boldsymbol{u}_\ell;\boldsymbol{x}_\ell)+\mathcal{E}, K\right)\right)_m,
\end{align}
where the $\operatorname{Top}(;,K)$ operator retains the top $K$ values of a vector and sets the others to $-\infty$. Then only $K$ out of the $M_\ell$ terms effectively contribute to the weighted sum in \eqref{raw_moe}. In subsequent presentation, the term MoE model refers to the \emph{sparse MoE} by default.

As the number of experts is typically consistent across layers, i.e., $M_\ell = M$, we group the expert networks sharing the same index $m$ into a single entity, referred to as expert $m$ and parameterized by $\boldsymbol{e}_m$ \cite{Qin2025OptimalExpertSelectionDMoE}. Moreover, we denote the aggregation of all experts as $\mathcal{M}$, parameterized by
\begin{align}
    \boldsymbol{e} \triangleq [\boldsymbol{e}_1^T,\dots,\boldsymbol{e}_M^T]^T.
\end{align}
The parameters of gating networks across all layers are also denoted collectively as $\boldsymbol{u}$ and termed the gate.
Then according to \eqref{raw_moe}, the model output can be described as a function of $\boldsymbol{e}$ and $\boldsymbol{u}$. For a given data sample $(\boldsymbol{x}_n,y_n)$, the model output is $Q(\boldsymbol{x}_n;\boldsymbol{e},\boldsymbol{u})$ and the sample-wise loss can be denoted by $f_n(\boldsymbol{e},\boldsymbol{u})$. By aggregating all trainable parameters as
\begin{align}
    \boldsymbol{\theta} &= [\boldsymbol{e}^T,\boldsymbol{u}^T]^T=[\boldsymbol{e}_1^T,\dots,\boldsymbol{e}_M^T,\boldsymbol{u}^T]^T,
\end{align}
the sample-wise loss can be written as $f_n(\boldsymbol{\theta})$, which aligns with the expression in Sec. \ref{fl-framework}. 
According to \eqref{eq:local-loss} and \eqref{eq:global-loss}, the local and global loss functions are given as
\begin{align}\label{eq:loss-baseline}
    F_{c,j}(\boldsymbol e,\boldsymbol u)
    &=  \frac{1}{N}
    \sum_{(\boldsymbol{x}_n,y_n)\in\mathcal D_{c,j}}
    f(\boldsymbol e,\boldsymbol u;\boldsymbol{x}_n,y_n),\\
    F(\boldsymbol e,\boldsymbol u)
    & =
    \frac{1}{CJ}
    \sum_{c=1}^C \sum_{j\in\mathcal U_c} F_{c,j}(\boldsymbol e,\boldsymbol u).
\end{align}
Then the MoE model can be tuned with distributed data under the FL framework following procedures in Sec. \ref{fl-framework}.

\begin{figure}[t]
  \centering
  \includegraphics[height=5cm]{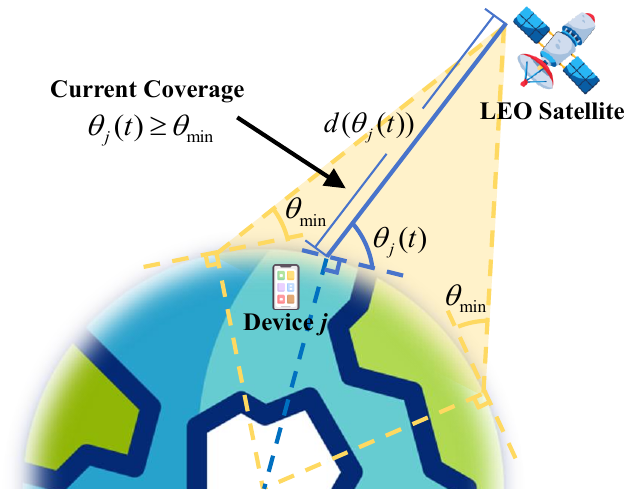}
  \caption{The instantaneous coverage of an LEO satellite.}
  \label{geometry_fig}
\end{figure}

\subsection{Satellite-Ground Communication Model}\label{subsec:comm}
By using the LEO satellite as the server, the gradient uploading and model downloading processes rely on the satellite-ground communication links. As shown in Fig. \ref{geometry_fig}, given the elevation angle between the satellite and device $j$ as $\theta_j(t)$, the channel for device $j$ can be modeled as 
\begin{align}\label{raw-channel}
    \!\!\!h_{\!j}(t)\!=\!\!\left\{\begin{array}{l}
    \!\!\!\!a(\theta_{\!j}(t)) e^{j \phi_{\text {rain }}}\!  e^{j 2 \pi \!f_{\!D}\!(\theta_{\!j}(t)) t}  \tilde{h}_{\!j}(t), \ \text{if}\ \theta_{\!j}(t)  \!\ge\! \theta_{\!\mathrm{min}},\\
    \!\!\!\!0,  \qquad\qquad\qquad\qquad\qquad\qquad\  \text{otherwise},
\end{array}\right.
\end{align}
where
\begin{align}\label{large-scale-fading}
    a(\theta_j(t))=\frac{\lambda}{4 \pi d(\theta_j(t))} \cdot 10^{-\frac{A(\theta_j(t))}{20}} \cdot \zeta,
\end{align}
gives the large-scale fading coefficient. 
In \eqref{large-scale-fading}, the term $d(\theta_j(t))$ represents the distance between the satellite and device $j$ dependent on the elevation angle $\theta_j(t)$. The atmospheric fading is denoted as $A(\theta_j(t))$, which is typically inversely proportional to the elevation angle, i.e., $A(\theta_j(t)) \propto 1 / \sin (\theta_j(t))$. The shadowing effect is denoted as $\zeta$. Furthermore, in the channel model \eqref{raw-channel}, $\theta_\mathrm{min}$ is a constant elevation threshold, where a device $j$ is considered connectable only when $\theta_j(t) \!\ge\! \theta_\mathrm{min}$. This captures the inherent intermittency characteristic of ground-satellite links. Consequently, when an FL task is considered in this scenario, the intermittent connectivity poses a major challenge to training efficiency. The terms $e^{j \phi_{\text {rain }}}$ and $e^{j 2 \pi f_D(\theta_j(t)) t}$ are phase shifts induced by rain and the Doppler effect, respectively. The Doppler frequency is given by
\begin{align}
    f_D(\theta_j(t))=\frac{v_r(\theta_j(t))}{\lambda},
\end{align}
where $\lambda$ is the carrier wavelength and $v_r(\theta_j(t))$ represents the radial velocity, which reaches its maximum when the satellite is near the horizon ($\theta_j(t) \approx 0^{\circ}$)  and drops to zero when the satellite is right above device $j$ ($\theta_j(t) = 90^{\circ}$). This characteristic coincides with the previous statement that high-rate transmission is only viable when the satellite is at high elevation angles, i.e. $\theta_j(t) \ge \theta_\mathrm{min}$. The small-scale fading $\tilde{h}_j(t)$ follows a Rician fading model.
Given the above channel model, we further consider the uplink transmission capability, which is the bottleneck of satellite-ground communication. The received signal at the satellite can be presented as 
\begin{align}
    y_j(t) = \sqrt{P_j G_\mathrm{sat}} h_j(t) s_j(t) + z_j(t),
\end{align}
where $s_j(t)\sim \mathcal{CN}(0,1)$ is the transmitted signal and $z_j(t) \sim \mathcal{CN}(0,\sigma_z^2)$ is the additional noise at the receiver. The factors $P_j$ and $G_\mathrm{sat}$ denote the transmit power of device $j$ and the antenna gain at the satellite, respectively. With a Rician channel $h_j(t)$, the ergodic channel capacity can be given as
\begin{align}\label{ergodic}
    C_{\text {ergodic}}&= B \cdot \mathbb{E}_t\left[\log _2(1+\frac{P_j G_\mathrm{sat} |h_j(t)|^2}{\sigma_z^2})\right],
\end{align}
with bandwidth $B$. With proper channel coding, the channel capacity can be evaluated using the Shannon capacity, which is the upper bound of the ergodic channel capacity, given as
\begin{align}
    C_{\text{upper}}= B \cdot \log _2\left(1+\frac{P_j G_\mathrm{sat} |a(\theta_j)|^2}{\sigma_z^2}\right).
\end{align}

According to the above elaboration, an uplink transmission rate of $5$ Mbps can be achieved with practical settings of a $23$ dBm transmit power, a $-160$ dB large-scale fading (600 km satellite-ground distance), a $40$ dBi antenna gain at the satellite, and a $5$ MHz sub-channel bandwidth. The connection phase lasts for $5\sim10$ minutes, making it possible to upload a file of $200$ MB. However, the current STN still faces the following challenges.
Firstly, the intermittency modeled in \eqref{raw-channel} prevents model synchronization across all devices. Hence only the connected devices can participate in the current training round, causing severe under-utilization of the distributed computation resources. Secondly, full-parameter tuning imposes intensive communication overhead considering the limited capacity of  STN links. Moreover, though the MoE architecture alleviates the computational overhead with sparse activation, each device still needs to load the entire model as data samples activates different experts, leading to high memory demands. The above issues urge designs to engage all devices in each training round and enable lightweight local training and parameter uploading. To this end, we propose EMS-FL in the next section.

\section{EMS-FL for STN-Assisted MoE Fine-Tuning}\label{sec:EMS-FL}

In this section, we first introduce a synchronous FL baseline and then present the detailed design of EMS-FL. An enhanced version of EMS-FL is also provided to facilitate partial model loading at devices by applying a masked gate in local training.

\subsection{Synchronous FL Baseline}\label{sec:baseline}
To facilitate FL with model synchronization, a straightforward idea is to only involve device cluster within the instantaneous coverage of the satellite in each training round. 
respectively. In line with the standard FL framework, each training round $t$ in the baseline scheme comprises three steps:
\begin{itemize}
    \item \textbf{Model broadcast}: The satellite first broadcasts
$\boldsymbol{\theta}_t = [\boldsymbol e_t^T,\boldsymbol u_t^T]^T$
to the connected device cluster with index $c^\prime = t - \lfloor \frac{t}{C} \rfloor C$, where the function $\lfloor x \rfloor$ gives the greatest integer no larger than $x$.
    \item \textbf{Local model update}: The local gradients of the experts and gate are denoted as 
    \begin{align}\label{baseline-gradient}
        \mathbf{g}_{t,c^\prime,j}^\mathrm{E} &=  \nabla_{\boldsymbol{e}_t} F_{c^\prime,j}(\boldsymbol{e}_t, \boldsymbol{u}_t), \\
        \mathbf{g}_{t,c^\prime,j}^\mathrm{U} &=  \nabla_{\boldsymbol{u}_t} F_{c^\prime,j}(\boldsymbol{e}_t, \boldsymbol{u}_t), 
    \end{align} 
    respectively.
    The deviation of the local gradient estimates from the global gradients are thereby given as 
    \begin{align}
        \boldsymbol{\delta}_{t,c^\prime,j}^{\mathrm{E}} &= \mathbf g^{\mathrm E}_{t,c^\prime,j} - \nabla_{\boldsymbol{e}_t} F(\boldsymbol{e}_{t},\boldsymbol u_{t}),\\
        \boldsymbol{\delta}_{t,c^\prime,j}^{\mathrm{U}} &= \mathbf g^{\mathrm U}_{t,c^\prime,j} - \nabla_{\boldsymbol u_t} F(\boldsymbol{e}_{t},\boldsymbol u_t).
    \end{align}     
    The local model update can be achieved by
    \begin{align}
    \boldsymbol{e}_{t+1,c^\prime,j} &= \boldsymbol{e}_t - \eta_t^\mathrm{E}\mathbf{g}_{t,c^\prime,j}^\mathrm{E},  \\
    \boldsymbol{u}_{t+1,c^\prime,j} &= \boldsymbol{u}_t - \eta_t^\mathrm{U}\mathbf{g}_{t,c^\prime,j}^\mathrm{U}.
    \end{align}
    \item \textbf{Model upload and aggregation}: With reliable uplink transmission, the global model update at the satellite can be achieved by model aggregation, given as
    \begin{align}
        \boldsymbol{e}_{t+1} \!=\! \frac{1}{J} \!\!\sum_{j\in\mathcal{U}_{c^\prime}} \!\boldsymbol{e}_{t+1,c^\prime,j}, \quad
        \boldsymbol{u}_{t+1} \!=\! \frac{1}{J} \!\!\sum_{j\in\mathcal{U}_{c^\prime}} \!\boldsymbol{u}_{t+1,c^\prime,j}.
    \end{align}  
\end{itemize}
Following the training approach above, the MoE model can be tuned until convergence after sufficient number of rounds. However, this baseline scheme suffers from two major drawbacks. Firstly, it engages only the devices within the instantaneous coverage of the satellite, failing to fully exploit the distributed computation and data resources. Secondly, it requires all devices to load the entire model, imposing high memory demands. To address these limitations, we propose EMS-FL, which splits the MoE model according to the specialization of experts, enabling asynchronous and lightweight local training. 

\begin{rem}[Model Update with LoRA]
    Low-rank-adaptation (LoRA) is a parameter-efficient fine-tuning technique that reduces communication overhead by decomposing model updates into low-rank matrices \cite{Hu2022LoRA}. Given limited capacity of STN links, we adopt LoRA to facilitate lightweight model aggregation. Taking the baseline scheme as an example, the model update $\Delta \boldsymbol{e}_{t,c^\prime,j} = \boldsymbol{e}_{t+1,c^\prime,j} - \boldsymbol{e}_t$ is factorized into two low-rank matrices and uploaded to the satellite. Then the local model $\boldsymbol{e}_{t+1,c^\prime,j}$ is reconstructed at the satellite for aggregation. In this work, LoRA is applied by default for model uploading.
\end{rem}

\subsection{EMS-FL for MoE}\label{sec:proposed}
The idea of EMS-FL is motivated by the specialization of experts in MoE models, where individual experts are tailored to specific data modalities. 
Specifically, since MoE employs the Top-K routing strategy, the experts tend to be functionally specialized once trained on multi-modal data. For a well-trained MoE, each expert is only activated when dealing with specific data modalities and remains inactive otherwise. In light of this fact, the EMS-FL design begins with an expert-driven model splitting algorithm, as described in the following.

\begin{algorithm}[t]
\caption{Expert-Driven Model Splitting}\label{model-split}
\KwIn{Model parameters $\boldsymbol{\theta}_t$, truncated expert relevant probability $\{\tilde{p}_{m,c}, \forall m, \forall c\}$}
\KwOut{Expert sets $\{\mathcal{M}_1,\mathcal{M}_2,\dots,\mathcal{M}_C\}$}
Initialize the feasible set as $\mathcal{C}_\mathrm{feasible} = \mathcal{C}$ and $C+1$ empty sets including $\mathcal{M}_1,\mathcal{M}_2,\dots,\mathcal{M}_C$ and $\mathcal{M}_\mathrm{full}$;

\For{ $m= 1,2,\dots$ }{

    Compute expert assignment probability $\!\{p_{m,c}^\mathrm{assign}\!\}_{\!c=1}^{\!C}$;
    
    Form a distribution $p(c) = p_{m,c}^\mathrm{assign}$, $c=1,\dots, C$;
    
    \If{$\boldsymbol{e}_m \notin \mathcal{M}_\mathrm{full}$}{
    
        Randomly pick $c^\star$ from $\mathcal{C}$ with $p(c) \neq 0$;
        
        \If{$c^\star \in \mathcal{C}_\mathrm{feasible}$ \textbf{or} $\sum_{c\in \mathcal{C}_\mathrm{feasible}} p(c) = 0$}{
        
        $\mathcal{M}_{c^\star} \leftarrow \mathcal{M}_{c^\star} \cup \boldsymbol{e}_m$, $\mathcal{M}_\mathrm{full} \leftarrow \mathcal{M}_\mathrm{full} \cup \boldsymbol{e}_m$;
        
        $\mathcal{C}_\mathrm{feasible} \leftarrow \mathcal{C}_\mathrm{feasible} \setminus \{c^\star\}$ if $|\mathcal{M}_{c^\star}| \ge K$;
        }
    }
}
\textbf{return} expert sets $\{\mathcal{M}_1,\mathcal{M}_2,\dots,\mathcal{M}_C\}$;
\end{algorithm}

\textbf{Expert-Driven Model Splitting}: The idea of expert-driven model-splitting is to assign each device cluster the experts that are strongly correlated with its local data, whereas the remaining parts of the MoE model are kept frozen. To this end, a representative \emph{trial dataset} $\mathcal{D}_c^\mathrm{trial}$ containing $N^\mathrm{trial}$ data samples is firstly collected from each device cluster $c$, which reflects the local data distribution. By feeding all the $N^\mathrm{trial}$ data samples into the current MoE model, we first compute the expert relevant probability $p_{m,c}$, defined as the probability that the gating network routes to expert $m$ at any MoE layer. For example, if expert $m$ is selected in $L_{m} (0 \le L_{m} \le L \times N^\mathrm{trial})$ MoE layers in total, the expert relevant probability is given by
\begin{align}\label{selection-probability}
    p_{m,c} = \frac{ L_{m} }{L \times N^\mathrm{trial}}. 
\end{align}
To identify the strongly correlated experts for device cluster $c$, a threshold-based truncation operation is applied as 
\begin{align}\label{truncation}
\tilde{p}_{m, c}=\left\{\begin{array}{ll}p_{m, c}, \quad& p_{m, c} \geq p_{\mathrm{th}} \\
0, \quad& p_{m, c}<p_{\mathrm{th}}
\end{array},\right.
\end{align}
given a threshold $p_\mathrm{th}$.
The \emph{expert assignment probability} is then defined as
\begin{align}\label{assignment-probability}
    p_{m,c}^\mathrm{assign} = \frac{\tilde{p}_{m, c}}{\sum_{c=1}^{C}\tilde{p}_{m, c}},
\end{align}
which naturally satisfies the normalization condition $\sum_{c=1}^{C}p_{m,c}^\mathrm{assign} = 1$. For each expert $m$, it is uniquely assigned to one device cluster through a probabilistic selection based on the distribution of $\{p_{m,c}^\mathrm{assign}\}_{c=1}^C$. We also set some restrictions in the expert selection process to ensure the experts are assigned to as many device clusters as possible. The details are provided in Algo. \ref{model-split}. The \emph{expert group} assigned to device cluster $c$ is denoted as $\mathcal{M}_c$ and parameterized as $\tilde{\boldsymbol{e}}_c$, containing $M_c$ experts. The rest $M-M_c$ experts are also loaded while frozen during training. This expert-driven model-splitting algorithm ensures that all experts are assigned non-overlappingly across the $C$ clusters, which thereby gives
\begin{align}
    \boldsymbol{e} = [\tilde{\boldsymbol{e}}_1^T,\dots,\tilde{\boldsymbol{e}}_C^T]^T.
\end{align}
The model-splitting operation is conducted periodically so that the experts can be trained by all the strongly-correlated device clusters in turn. Note that the model-splitting operation tends to assign the experts to device clusters with more correlated data. We define two ratio parameters as follows.
\begin{itemize}
    \item \emph{Local Expert Relevant Ratio} $\alpha_c$: Defined as the fraction of data samples in $\mathcal{D}_c$ that are correlated to $\tilde{\boldsymbol{e}}_c$.
    \item \emph{Global Expert Relevant Ratio} $\beta_c$: Defined as the fraction of data samples in  $\mathcal{D}$ that are correlated to $\tilde{\boldsymbol{e}}_c$. 
\end{itemize}
Nore that the information $\{(\alpha_c, \beta_c),\forall c\}$ is only defined for convergence performance analysis in Sec. \ref{sec:convergence} and unknown in practice. We also assume that the data samples correlated to each expert group $\tilde{\boldsymbol{e}}_c$ are drawn from the same source distribution while stored non-uniformly across device clusters, thereby creating heterogeneity in the data distribution.

\begin{rem}[Expert Relevant Probability]\label{remark-expert-relevance}
For each data modality, e.g., the dataset at a device cluster $c$, the expert relevant probability 
 $p_{m,c}$ typically exhibits a skewed distribution where only a small subset of experts are strongly correlated with the data \cite{Fedus2022Switch,Zoph2022STMoE}. Moreover, the strongly correlated experts can be categorized into two classes. One is \emph{shared experts} which are highly activated for almost all kinds of data. Another is \emph{domain-specific experts} which are activated only for a specific data modality.  
 The remaining experts are seldom activated, usually with a probability below $1\%$ each. 
\end{rem}

\begin{algorithm}[!t]
\caption{Federated Training (EMS-FL)}\label{algo-training}
\KwIn{MoE model $\boldsymbol{\theta}_0$, step-sizes $\eta^\mathrm{E}, \eta^\mathrm{U}$, threshold $p_\mathrm{th}$, device sets $\{\mathcal{U}_c\}_{c=1}^C$}
\KwOut{Global parameters $\boldsymbol{u}_{t_\mathrm{max}}$ and $\{\boldsymbol{e}_{{t_\mathrm{max}},m}\}_{m=1}^M$}
Initialize $\{\mathcal{M}_c\}_{c=1}^C$ with Algo. \ref{model-split}, and $\boldsymbol{u}_0$, $\{\tilde{\boldsymbol{e}}_{0,c}\}_{c=1}^C$ according to $\boldsymbol{\theta}_0$;

\For{$t=0,1,\dots,t_\mathrm{max}$}{
  
    \For{$c=1,2\dots, C$}{
  
        \eIf{$c = t - \lfloor \frac{t}{C} \rfloor C$}{
      
        Upload local experts $\{\tilde{\boldsymbol{e}}_{t,c,j}\}_{j\in\mathcal{U}_c}$;
      
        Expert update by aggregation in \eqref{global-update};
      
        Model download and local initialization with $\tilde{\boldsymbol{e}}_{t,c,j} = \tilde{\boldsymbol{e}}_{t,c}$, $\boldsymbol{u}_{t,j} = \boldsymbol{u}_{t}$ for $j \in \mathcal{U}_c$;

        Local update to get $\{\tilde{\boldsymbol{e}}_{t+1,c,j}\}_{j \in \mathcal{U}_c}$ and $\{\boldsymbol{u}_{t+1,j}\}_{j \in \mathcal{U}_c}$;
      
        Local gate upload for $\{\boldsymbol{u}_{t+1,j}\}_{j \in \mathcal{U}_c}$;
      
        Gate update with \eqref{gate-globale-update} to get $\boldsymbol{u}_{t+1}$;

        Download gate $\boldsymbol{u}_{t+1}$ to $j \in \mathcal{U}_c$;
      
        }
        {
        Local expert update to get $\{\tilde{\boldsymbol{e}}_{t+1,c,j}\}_{j \in \mathcal{U}_c}$;
        }
    }
}
\textbf{return} global parameters $\boldsymbol{u}_{t_\mathrm{max}}$ and $\{\boldsymbol{e}_{{t_\mathrm{max}},m}\}_{m=1}^M$;
\end{algorithm}

\textbf{Asynchronous Local Training}: As the satellite passes each device cluster once per orbital round, each device cluster experiences two phases in each orbital round, i.e., a \emph{connected phase} and a \emph{disconnected phase}. To adapt to the intermittent connectivity of satellite-ground communication, we propose asynchronous local expert training. 
For presentation convenience, we assume that a connected phase only comprises a single round of training, and the satellite passes all device clusters continuously, i.e., an orbital round consists of $C$ connected phases. At time slot $t=1$, the satellite begins the training process from device cluster $c=1$. We take device cluster $c$ as an example in the following elaboration.

\begin{itemize}
    \item \emph{Connected Phase}: In training round $t = t_\mathrm{orbit}C+c$ where $t_\mathrm{orbit}\in \mathbb{Z}$, devices in cluster $\mathcal{U}_c$ are connected to the satellite. Local experts $\{\tilde{\boldsymbol{e}}_{t,c,j}\}_{j \in \mathcal{U}_c}$ are uploaded to the satellite for expert-wise aggregation to update the corresponding experts in the global model, given as 
    \begin{align}\label{global-update}
        \tilde{\boldsymbol{e}}_{t,c} = \frac{1}{J}\sum_{j \in \mathcal{U}_c} \tilde{\boldsymbol{e}}_{t,c,j},
    \end{align}
    The entire updated global model is then downloaded to all devices in $\mathcal{U}_c$. For experts and the gate, an initialization is performed with $\tilde{\boldsymbol{e}}_{t,c}$ and $\boldsymbol{u}_{t}$. 
    Subsequently, a round of local training is performed during the connected phase, resulting in $\{\tilde{\boldsymbol{e}}_{t+1,c,j}\}_{j \in \mathcal{U}_c}$ and $\{\boldsymbol{u}_{t+1,j}\}_{j \in \mathcal{U}_c}$. Then the local gates $\{\boldsymbol{u}_{t+1,j}\}_{j \in \mathcal{U}_c}$ are uploaded to the satellite for aggregation, given as
    \begin{align}\label{gate-globale-update}
        \boldsymbol{u}_{t+1} = \frac{1}{J}\sum_{j \in \mathcal{U}_c} \boldsymbol{u}_{t+1,j},
    \end{align}
    and downloaded to devices in $\mathcal{U}_c$.
    \item \emph{Disconnected Phase}: Local training is conducted to update the experts during the disconnected phase, which comprises $C-1$ training rounds. The gate and the rest part of the MoE model are kept frozen during the period.  
\end{itemize}
The above training strategy is detailed in Algo. \ref{algo-training} and also illustrated in Fig. \ref{alo_fig}. The main idea is summarized as follows. During an orbital round, experts in $\mathcal{M}_c$ are trained consecutively for $C$ rounds at device cluster $c$, covering both connected and disconnected phases, and then uploaded to the satellite for aggregation and updating. After all experts in $\mathcal{M}_c$ are updated and synchronized across device cluster $c$, the gate is also tuned with one training round during the connected phase. 

\begin{figure}[t]
  \centering
  \includegraphics[scale=0.35]{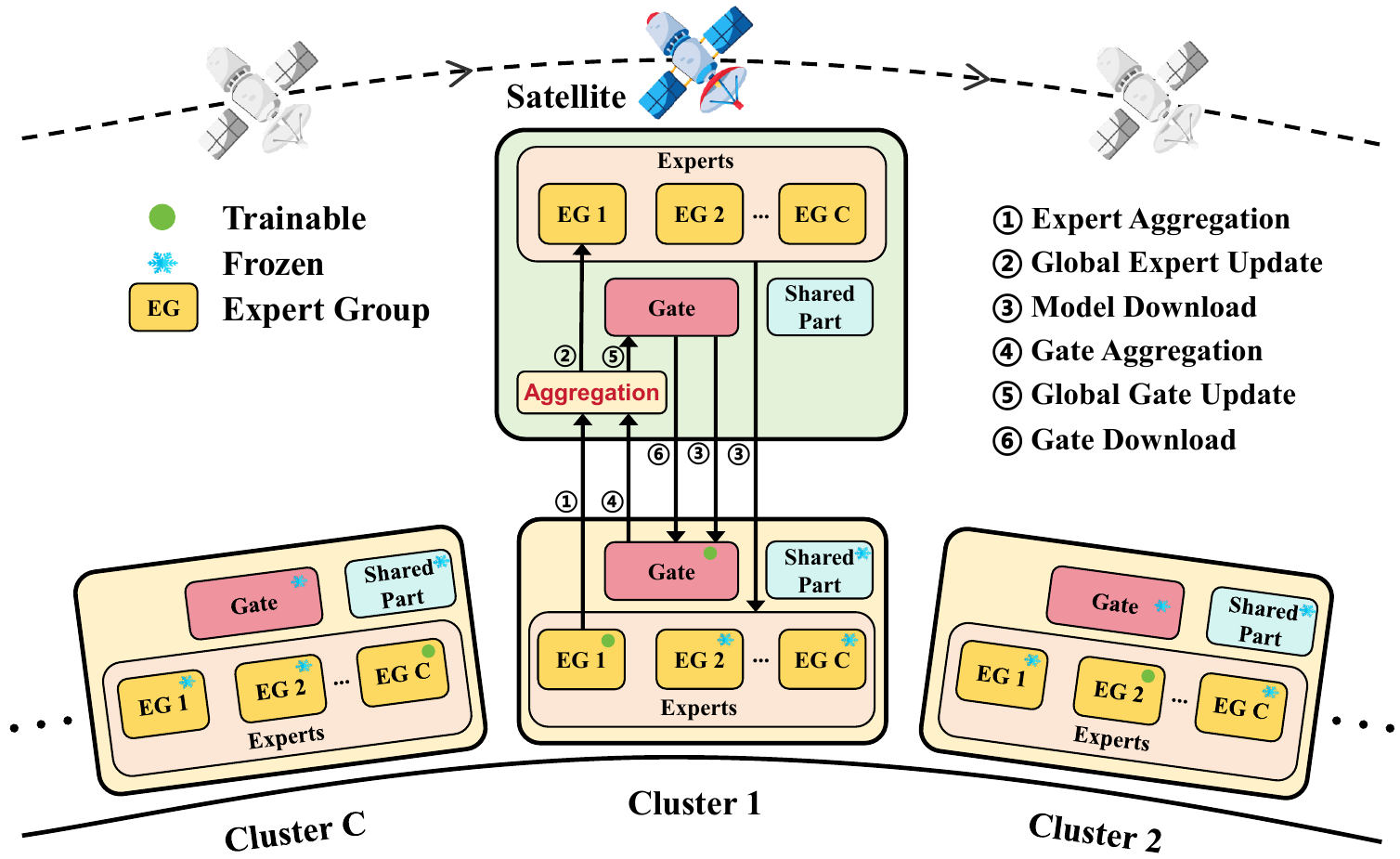}
  \caption{Asynchronous federated learning across device clusters.}
  \label{alo_fig}
\end{figure}

After introducing the framework of EMS-FL, the local training procedures on both experts and the gate are specified as follows.
The local loss at device $j$  from cluster $c$ is
\begin{align}\label{eq:local-loss-device}
    F_{c,j}(\boldsymbol e,\boldsymbol u)
    &=  \frac{1}{N}
    \sum_{(\boldsymbol x,y)\in\mathcal D_{c,j}}
    f(\boldsymbol e,\boldsymbol u;\boldsymbol x,y).
\end{align}
In training round $t$, the local gradient for expert group $\boldsymbol{\widetilde{e}}_{t,c,j}$ at device $j\in\mathcal U_c$ is given as
\begin{align}\label{eq:stochastic-gradients-E}
    \mathbf g^{\mathrm E}_{t,c,j}
    &= \nabla_{\boldsymbol{\widetilde{e}}_c}
    F_{c,j}(\boldsymbol{\widetilde{e}}_{t,c,j},\boldsymbol u_{t,c,j}),
\end{align}
and its deviation from the global gradient is written as
\begin{align}\label{eq:stochastic-gradients-E-error}
    \boldsymbol{\delta}_{t,c,j}^{\mathrm{E}} = \mathbf g^{\mathrm E}_{t,c,j} - \gamma_c \nabla_{\boldsymbol{\widetilde{e}}_c}
    F(\boldsymbol{\widetilde{e}}_{t,c,j},\boldsymbol u_{t,c,j}) .
\end{align}
with $\gamma_c = \frac{\alpha_c}{\beta_c}$. We use a scaled global gradient in \eqref{eq:stochastic-gradients-E-error} because the local density of data samples correlated to expert group ${\boldsymbol{\widetilde{e}}_c}$ is $\gamma_c$ times the corresponding density in the global dataset. For the gate, the gradient is only calculated in the round of the connected phase, given as
\begin{align}\label{eq:stochastic-gradients-U}
    \mathbf g^{\mathrm U}_{t,c,j}
    &=
    \nabla_{\boldsymbol u}
    F_{c,j}(\boldsymbol{e}_{t},\boldsymbol u_t), \quad \text{if} \ c = t - \lfloor \frac{t}{C} \rfloor C,
\end{align}
and its deviation from the global gradient is
\begin{align}\label{eq:stochastic-gradients-U-error}
    \boldsymbol{\delta}_{t,c,j}^{\mathrm{U}} = \mathbf g^{\mathrm U}_{t,c,j} - \nabla_{\boldsymbol u}
    F(\boldsymbol{e}_{t},\boldsymbol u_t), \quad  \text{if} \ c = t - \lfloor \frac{t}{C} \rfloor C.
\end{align}



\begin{rem}[Global Model Update]
The global model is referred to as the model at the satellite. In each training round $t$, the experts in the global model are partially updated by 
\begin{align}\label{global-expert}
\tilde{\boldsymbol{e}}_{t, c}= \begin{cases} \text{Eq. \eqref{global-update}}, & \text { if } c = t - \lfloor \frac{t}{C} \rfloor C, \\ \tilde{\boldsymbol{e}}_{t-1, c}, & \text { otherwise }.\end{cases}
\end{align}
The gate in the global model is updated according to Eq. \eqref{gate-globale-update}. Therefore, from the viewpoint of the satellite, the global model undergoes partial updates in every training round. After one complete orbital cycle, each expert in the global model is updated exactly once, whereas the gate is updated $C$ times.
\end{rem}

\begin{algorithm}[t]
\caption{Federated Training (Enhanced EMS-FL)}\label{algo-enhanced}
\KwIn{MoE model $\boldsymbol{\theta}_0$, step-sizes $\eta^\mathrm{E}, \eta^\mathrm{U}$, threshold $p_\mathrm{th}$, device sets $\{\mathcal{U}_c\}_{c=1}^C$}
\KwOut{Global parameters $\boldsymbol{u}$ and $\{\boldsymbol{e}_{{t_\mathrm{max}},m}\}_{m=1}^M$}
Initialize $\{\mathcal{M}_c\}_{c=1}^C$ with Algo. \ref{model-split}, $\boldsymbol{u}$, $\{\tilde{\boldsymbol{e}}_{0,c}\}_{c=1}^C$ according to $\boldsymbol{\theta}_0$, and $\boldsymbol{u}_c^\mathrm{mask} = \operatorname{masking}(\boldsymbol{u},\mathcal{M}_c)$;

\For{$t=0,1,\dots,{t_\mathrm{max}}$}{
  
    \For{$c=1,2\dots, C$}{
  
        \eIf{$c = t - \lfloor \frac{t}{C} \rfloor C$}{
      
        Upload local experts $\{\tilde{\boldsymbol{e}}_{t,c,j}\}_{j\in\mathcal{U}_c}$;
      
        Expert update by aggregation with \eqref{global-update};
      
        Model download and local initialization with $\tilde{\boldsymbol{e}}_{t,c}$ for $j \in \mathcal{U}_c$;

        Local update to get $\{\tilde{\boldsymbol{e}}_{t+1,c,j}\}_{j \in \mathcal{U}_c}$;
        }
        {
        Local expert update to get $\{\tilde{\boldsymbol{e}}_{t+1,c,j}\}_{j \in \mathcal{U}_c}$;
        }
    }
}
Train gate $\boldsymbol{u}$ with standard EMS-FL for a few rounds;

\textbf{return} global parameters $\boldsymbol{u}$ and $\{\boldsymbol{e}_{{t_\mathrm{max}},m}\}_{m=1}^M$;
\end{algorithm}

\subsection{Enhanced EMS-FL}\label{sec:enhanced}
Inspired by the idea of asynchronous local training in EMS-FL, this subsection presents a more radical design called the enhanced EMS-FL. We introduce a masking mechanism applied to the local gate, forcing it to route inputs only to the assigned expert group. This idea stems from the observation that the gate is typically well-trained or initialized with empirical knowledge, and can be easily tuned after expert training. Thus, the expert training will not be hampered by a frozen and masked gate. Instead, it offers two significant advantages. First, experts assigned to device cluster $c$ are activated and trained with higher probabilities compared with the standard EMS-FL as experts outside $\mathcal{M}_c$ will not be activated. Second, since the gate does not route to frozen experts, the local training can be further lightened by not loading them into memory. The training procedures of the enhanced EMS-FL are presented below and also summarized in Algo. \ref{algo-enhanced}.


Firstly, the model-splitting and expert assignment processes still follow Algo. \ref{model-split}. Then a masked gate is generated as 
\begin{align}\label{masking}
    \boldsymbol{u}_c^\mathrm{mask} = \operatorname{masking}(\boldsymbol{u},\mathcal{M}_c, \mathcal{M}),
\end{align}
for each device in cluster $c$, where the masking operation is to set the output logits of the gate corresponding to experts $m \notin \mathcal{M}_c$ to zero. The logits after masking are then passed through a $\operatorname{softmax}(\cdot)$ operator to obtain the weighting distribution, based on which the experts are selected via the Top-K routing strategy. 
By using the masked gate, each device cluster $c$ trains only the experts in $\tilde{\boldsymbol{e}}_c$ during both the connected and disconnected phases. The model aggregation is performed once during the connected phase in each orbital round to update the global expert $\tilde{\boldsymbol{e}}_{t,c}$. After the experts are well-trained through sufficient orbital rounds, the gate can be tuned by executing the standard EMS-FL for a few rounds. Compared to standard EMS-FL, the enhanced version improves local training efficiency and highly reduces the local memory demands. Although the introduction of the masked gate may lead to a more differentiated training of experts, this is not detrimental to the MoE model. On the contrary, the potential of each expert is fully unleashed with sufficient activation. In fact, conventional MoE models often suffer from the issue of load imbalance, resulting in a large number of experts being infrequently activated and under-trained. By introducing the masking operation, the enhanced EMS-FL constrains local training to be routed to the assigned experts $\tilde{\boldsymbol{e}}_c$ at each device cluster $c$, which inherently strengthens the load balance.

\section{Convergence Analysis}\label{sec:convergence}

In this section, we give a theoretical analysis on the convergence performance for the proposed EMS-FL and baseline schemes. Since the masking operation is non-differentiable, we confine our analysis to the standard EMS-FL and consider the enhanced version gives comparable performance. 

For tractable analysis on the convergence of EMS-FL, several widely-adopted assumptions are first given as follows.
\begin{assumption}[$L$-smoothness]\label{ass:smooth}   \emph{
The global loss function $F(\boldsymbol\theta)$ is differentiable and
$L$-smooth, i.e., there exists a constant $L>0$ such that 
\begin{align}\label{eq:global-L-smooth}
  \big\|\nabla F(\boldsymbol\theta_1)-\nabla F(\boldsymbol\theta_2)\big\|\le  L\big\|\boldsymbol\theta_1-\boldsymbol\theta_2\big\|, \quad \forall \boldsymbol\theta_1,\boldsymbol\theta_2. 
\end{align}
}
\end{assumption}

\begin{assumption}[Bounded Gradients]\label{ass:bounded-grad} \emph{
There exist constants $G_E>0$ and $G_U>0$ satisfying
\begin{align}\label{eq:bound-gradients-E}
  \|\nabla_{\boldsymbol{e}} F(\boldsymbol{e},\boldsymbol u)\| \le G_E, \quad \|\nabla_{\boldsymbol{u}} F(\boldsymbol{e},\boldsymbol u)\| \le G_U, \quad \forall \boldsymbol{e}, \boldsymbol{u}.
\end{align}
}
\end{assumption}

\begin{assumption}[Bounded Gradient Error in EMS-FL]\label{ass:var} \emph{
It is assumed that the stochastic gradients $\mathbf g^{\mathrm E}_{t,c,j}$ in \eqref{eq:stochastic-gradients-E} are independent and unbiased estimates of scaled global gradient $\gamma\nabla_{\boldsymbol{\widetilde{e}}_c} F\bigl(\boldsymbol{\widetilde{e}}_{t,c,j},\boldsymbol u_{t,c,j}\bigr)$, i.e.,
\begin{align}\label{expert-variance}
    &\mathbb{E}[\mathbf g^{\mathrm E}_{t,c,j}] = \gamma_c\nabla_{\boldsymbol{\widetilde{e}}_c} F(\boldsymbol{\widetilde{e}}_{t,c,j},\boldsymbol u_{t,c,j}), \quad\ \forall t,c,j, \\
    &\mathbb{E}[\|\boldsymbol{\delta}_{t,c,j}^{\mathrm{E}}\|^2]\ \le \gamma^2\sigma_E^2,  \qquad \qquad\qquad\  \forall t,c,j,
\end{align}
with constant $\sigma_E^2$ and $\gamma = \max\{\gamma_c, \forall c\}$. 
The stochastic gradients $\mathbf g^{\mathrm U}_{t,c,j}$ in \eqref{eq:stochastic-gradients-U} are unbiased estimates of global gradient $\nabla_{\boldsymbol u} F\bigl(\boldsymbol{e}_{t},\boldsymbol u_t\bigr)$ while correlated across devices, i.e.,
\begin{align}\label{noniid-gate-variance}
    \!\!\mathbb{E}[\mathbf g^{\mathrm U}_{t,c,j}] \!=\! \nabla_{\!\boldsymbol u} F(\boldsymbol{e}_{t},\boldsymbol u_t), \quad
    \mathbb{E}[\|\boldsymbol{\delta}_{t,c,j}^{\mathrm{U}}\|^2]\ \!\!\le\! \sigma_U^2, \quad \forall t,c,j, 
\end{align}
with constant $\sigma_U^2$.
}
\end{assumption}

Based on the above assumptions, the analytical convergence performance of EMS-FL is provided in the following. Since updating all experts and the gate of the global model takes one orbital cycle, we only evaluate the convergence performance at training rounds $t \in \{t_\mathrm{orbit} C, t_\mathrm{orbit} \in \mathbb{Z}\}$.

\begin{theorem}[Convergence of EMS-FL]
\label{thm:emsfl}
After $T$ orbital-cycle training with step-sizes $\eta^{\mathrm E} \le \frac{\eta^{\mathrm U}}{\gamma} \le  \frac{1}{\sqrt{T}}$ in EMS-FL, the gradient variance is upper bounded by
\begin{align}\label{eq:emsfl-theorem}
    &\frac{1}{T}  \sum_{t=0,C,\ldots,(T-1)C}  \mathbb E\|\nabla F(\boldsymbol\theta_t)\|^2 \nonumber \\
    \le& \frac{\mathbb E[F(\boldsymbol\theta_0) \!-\! F^\star]}{\gamma C\sqrt{T}}  \!+\! \frac{5L C^2\gamma[ C (G_E^2 \!+\! \sigma_E^2) \!+\! G_U^2\!+\!\sigma_U^2]}{\sqrt{T}}.
\end{align}
\end{theorem}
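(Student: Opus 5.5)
We would prove the theorem by a one-orbital-cycle descent argument applied to the global loss at the snapshot times $t=kC$. Write $\boldsymbol\theta_{kC}=[\tilde{\boldsymbol e}_{kC,1}^T,\dots,\tilde{\boldsymbol e}_{kC,C}^T,\boldsymbol u_{kC}^T]^T$. Over one cycle, each expert group $\tilde{\boldsymbol e}_c$ receives the average over $j\in\mathcal U_c$ of $C$ consecutive local gradient steps. The gate receives $C$ aggregated steps, one from each cluster during its connected phase.

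So the cycle increment is
\begin{align*}
\Delta\tilde{\boldsymbol e}_c=-\eta^{\mathrm E}\sum_{s=0}^{C-1}\frac1J\sum_{j}\mathbf g^{\mathrm E}_{kC+s,c,j},\qquad
\Delta\boldsymbol u=-\eta^{\mathrm U}\sum_{s=0}^{C-1}\frac1J\sum_{j}\mathbf g^{\mathrm U}_{kC+s,c_s,j}.
\end{align*}
We apply the $L$-smoothness descent lemma from Assumption~\ref{ass:smooth}:
\begin{align*}
F(\boldsymbol\theta_{(k+1)C})\le F(\boldsymbol\theta_{kC})+\langle\nabla F(\boldsymbol\theta_{kC}),\Delta\boldsymbol\theta\rangle+\tfrac L2\|\Delta\boldsymbol\theta\|^2 .
\end{align*}

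\textbf{Quadratic term.} Using Assumptions~\ref{ass:bounded-grad} and \ref{ass:var} together with $\|\sum_{s=1}^C a_s\|^2\le C\sum_s\|a_s\|^2$, each stochastic expert gradient has second moment at most $2\gamma^2(G_E^2+\sigma_E^2)$. Each gate gradient has second moment at most $2(G_U^2+\sigma_U^2)$. Hence $\mathbb E\|\Delta\boldsymbol\theta\|^2$ is at most of order $C^2[(\eta^{\mathrm E})^2\gamma^2(G_E^2+\sigma_E^2)+(\eta^{\mathrm U})^2(G_U^2+\sigma_U^2)]$. The expert block is summed over $C$ groups, which supplies the extra factor of $C$ on $G_E^2+\sigma_E^2$.

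\textbf{Inner-product term.} By unbiasedness, the expected expert step for group $c$ is $-\eta^{\mathrm E}\gamma_c\sum_s\nabla_{\tilde{\boldsymbol e}_c}F(\tilde{\boldsymbol e}_{kC+s,c,j},\boldsymbol u_{\cdot})$, evaluated at stale or local iterates. The gate step likewise evaluates the gradient at intermediate models $\boldsymbol\theta_{kC+s}$. We add and subtract $\nabla F(\boldsymbol\theta_{kC})$ in every term. The aligned part then yields $-C\min(\eta^{\mathrm E}\gamma_{\min},\eta^{\mathrm U})\|\nabla F(\boldsymbol\theta_{kC})\|^2$; the theorem's scaling effectively uses $\gamma\eta^{\mathrm E}$, $\eta^{\mathrm U}\approx\gamma\eta^{\mathrm E}$ via the step-size condition. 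For the drift part, we use Young's inequality $\langle a,b\rangle\le\frac12\|a\|^2+\frac12\|b\|^2$ together with smoothness, which gives $\|\nabla F(\boldsymbol\theta_{kC+s})-\nabla F(\boldsymbol\theta_{kC})\|\le L\|\boldsymbol\theta_{kC+s}-\boldsymbol\theta_{kC}\|$. The deviation $\|\boldsymbol\theta_{kC+s}-\boldsymbol\theta_{kC}\|^2$ is bounded exactly as in the quadratic term, but with $s\le C$ steps. Summing over $s$ produces the $C^2$ factor, times $L$ and the same $G,\sigma$ constants.

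Controlling this drift is the main obstacle. There are two sources of mismatch: local expert iterates drift from the global model across the disconnected phase, and each cluster's experts see a gate that is stale relative to the gate the other clusters are updating. The plan is to crudely bound every such mismatch by accumulated step lengths, without exploiting any cancellation, since bounded gradients make this possible. We would also absorb the mismatch between $\gamma_c$ and $\gamma$ into the constants, which explains the loose factor $5$.

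\textbf{Telescoping.} We take expectations, sum over $k=0,\dots,T-1$, telescope $F(\boldsymbol\theta_0)-F^\star$, and divide by $T\gamma C\eta^{\mathrm E}$. Substituting $\eta^{\mathrm E}\le\eta^{\mathrm U}/\gamma\le1/\sqrt T$ then gives the first term, $\mathbb E[F(\boldsymbol\theta_0)-F^\star]/(\gamma C\sqrt T)$. The quadratic and drift contributions, each proportional to $(\eta)^2/\eta$, together give the $5LC^2\gamma[C(G_E^2+\sigma_E^2)+G_U^2+\sigma_U^2]/\sqrt T$ term.
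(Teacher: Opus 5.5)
Your overall plan matches the paper's: a descent step over one orbital cycle, add-and-subtract against the snapshot gradient, crude drift bounds from Assumptions~\ref{ass:bounded-grad} and~\ref{ass:var}, then telescoping. Applying smoothness jointly to $\Delta\boldsymbol\theta$, instead of splitting into an expert part at fixed $\boldsymbol u_t$ and a gate part at fixed $\boldsymbol e_{t+C}$, is harmless. The genuine gap is your expression for the expert increment.

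The global group $\tilde{\boldsymbol e}_c$ changes between snapshots $kC$ and $(k+1)C$ only at cluster $c$'s connection round. What is uploaded there is the average of the $C$ local steps taken since cluster $c$'s \emph{previous} connection round $t'\le kC$, i.e.\ rounds $t',\dots,t'+C-1$. For every cluster except the one connected at $kC$, this window starts before the snapshot. Hence $\tilde{\boldsymbol e}_{(k+1)C,c}-\tilde{\boldsymbol e}_{kC,c}\neq-\eta^{\mathrm E}\sum_{s=0}^{C-1}\frac1J\sum_j\mathbf g^{\mathrm E}_{kC+s,c,j}$. For $\tau<kC$, replacing $\mathbf g^{\mathrm E}_{\tau,c,j}$ by its mean inside $\langle\nabla F(\boldsymbol\theta_{kC}),\cdot\rangle$ is not justified. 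The reason is that $\boldsymbol\theta_{kC}$ contains updates formed after, or together with, those gradients: the gate steps of rounds $t',\dots,kC-1$, including cluster $c$'s own connected-round gate step. So Assumption~\ref{ass:var} cannot be invoked against $\nabla F(\boldsymbol\theta_{kC})$.

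The paper's proof is organised around exactly this. It first swaps $\nabla_{\tilde{\boldsymbol e}_c}F(\boldsymbol\theta_t)$ for $\nabla_{\tilde{\boldsymbol e}_c}F(\boldsymbol\theta_{t'})$, paying $L\|\boldsymbol\theta_t-\boldsymbol\theta_{t'}\|\,\|\tilde{\boldsymbol e}_{t'+C,c}-\tilde{\boldsymbol e}_{t',c}\|=O((\eta^{\mathrm E}\gamma)^2LC^3)$ per group. Summed over groups, this term and the expert drift seen by the gate steps are the $C^4$ contributions. It then applies unbiasedness at the local iterate. Finally it lower-bounds $\|\nabla_{\tilde{\boldsymbol e}_c}F(\text{local})\|^2\ge\|\nabla_{\tilde{\boldsymbol e}_c}F(\boldsymbol\theta_t)\|^2-2LG_E\|\text{drift}\|$ by Cauchy--Schwarz and the gradient bound. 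That last choice matters. Young's inequality on $\langle\nabla F(\boldsymbol\theta_{kC}),\nabla F(\text{local})-\nabla F(\boldsymbol\theta_{kC})\rangle$ would give away half the descent, doubling the first term, and would produce $L^2$ rather than $L$.

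Two further steps fail as written.

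First, the mismatch between $\gamma_c$ and $\gamma=\max_c\gamma_c$ cannot be absorbed into the constant $5$. The descent coefficient multiplies $\|\nabla F(\boldsymbol\theta_{kC})\|^2$ and, after telescoping, divides $\mathbb E[F(\boldsymbol\theta_0)-F^\star]$. Your aligned term $-C\eta^{\mathrm E}\gamma_{\min}\|\nabla F\|^2$ therefore gives $\mathbb E[F(\boldsymbol\theta_0)-F^\star]/(\gamma_{\min}C\sqrt T)$, and no constant on the second term can compensate. The paper obtains $\gamma$ there only by writing each local expert step as $-\eta^{\mathrm E}\gamma\mathbf g^{\mathrm E}_{\tau,c,j}$, which in effect takes $\gamma_c=\gamma$ for all $c$. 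You need to state that identification explicitly rather than hide it in constants.

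Second, after dividing by $T\gamma C\eta^{\mathrm E}$, the condition $\eta^{\mathrm E}\le1/\sqrt T$ bounds $1/(T\eta^{\mathrm E})$ from below, not above. Also, if $\eta^{\mathrm U}>\gamma\eta^{\mathrm E}$, the gate terms $(\eta^{\mathrm U})^2/(\gamma\eta^{\mathrm E})$ are not $O(\gamma/\sqrt T)$. You need $\eta^{\mathrm E}=1/\sqrt T$ and $\eta^{\mathrm U}=\gamma\eta^{\mathrm E}$. That is how the paper closes the proof of Theorem~\ref{thm:baseline}; its proof of the present theorem simply repeats the inequality form of the statement.
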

\begin{proof}
    See Appendix A.
\end{proof}

The results in Theorem \ref{thm:emsfl} shows that the expected gradient variance is bounded and asymptotically approaches $0$ after a sufficient number of orbital cycles $T$. The coefficient $\frac{1}{\gamma C\sqrt{T}}$ associated with the term $\mathbb E[F(\boldsymbol\theta_0) \!-\! F^\star]$ defines the convergence rate, which drives the model parameters towards the optimum. The factor $\gamma$ represents the ratio of the local density of expert-correlated data samples to the corresponding global density. As the experts are assigned to devices clusters that contains more correlated data, the factor $\gamma$ lies in a range $[1,C]$ dependent on the data heterogeneity, which correspondingly affects the convergence rate.  The second term quantifies the deviation caused by gradient errors, which also converges to zero asymptotically. For comparison, the analytical convergence performance of the baseline scheme is also derived below.

To facilitate convergence analysis for the baseline scheme, we first introduce Assumption \ref{ass:var-baseline} to replace the original assumptions on $\mathbf g^{\mathrm E}_{t,c,j}$ and $\boldsymbol{\delta}_{t,c,j}^{\mathrm{E}}$.
 in the baseline scheme are incompatible with that used in EMS-FL.
\begin{assumption}[Bounded Gradient Error in Baseline]\label{ass:var-baseline} \emph{
We assume the stochastic gradients $\mathbf g^{\mathrm E}_{t,c,j}$ in \eqref{baseline-gradient} are unbiased estimates of global gradient $\nabla_{\boldsymbol{e}_t} F(\boldsymbol{e}_{t},\boldsymbol u_{t})$ while correlated across devices, i.e.,
\begin{align}\label{noniid-data-variance}
    &\mathbb{E}[\mathbf{g}^{\mathrm E}_{t,c,j}] = \frac{1}{CJ}\sum\limits_{c=1}^{C}\sum\limits_{j\in\mathcal{U}_c} \mathbf{g}^{\mathrm E}_{t,c,j} =    \nabla_{\boldsymbol{e}_t} F(\boldsymbol{e}_{t},\boldsymbol u_{t}), \quad \forall t,\\
        &\mathbb{E}[\|\boldsymbol{\delta}_{t,c,j}^{\mathrm{E}}\|^2] \le \zeta_E^2+ \sigma_E^2, \quad \forall t, 
\end{align}
where $\sigma_E^2$ is defined in Algo. \ref{ass:var}, and $\zeta_E^2$ accounts for the variance components from data heterogeneity. }
\end{assumption}

Based on the assumptions above, the analytical convergence rate of the baseline scheme is derived below. 
\begin{theorem}[Convergence of Baseline]\label{thm:baseline}
After $T$ orbital-cycle training with step-sizes $\eta^{\mathrm E} \le  \frac{\eta^{\mathrm U}}{\gamma} \le  \frac{1}{\sqrt{T}}$ in the baseline scheme, the gradient variance is upper bounded by
\begin{align}\label{eq:emsfl-theorem}
\!\!&\frac{1}{T}\sum_{t=0,C,\ldots,(T-1)C}\mathbb E\|\nabla F(\boldsymbol\theta_t)\|^2 \nonumber\\
\!\!\le&
\frac{\mathbb E\!\left[F(\boldsymbol\theta_0)-F^\star\right]}{C\sqrt{T}} \nonumber\\
&+\frac{5LC^2(G_E^2+\zeta_E^2+\sigma_E^2)+5LC^2\gamma^2(G_U^2+\sigma_U^2)}{\sqrt{T}}.
\end{align}
\end{theorem}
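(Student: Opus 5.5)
We follow the same one-orbital-cycle descent template that underlies Theorem~\ref{thm:emsfl}, now applied to the synchronous baseline of Sec.~\ref{sec:baseline}. Index orbital cycles by $\tau$, so that $\boldsymbol\theta_{\tau C}$ is the model at the start of cycle $\tau$. Within a cycle the baseline performs $C$ consecutive synchronous rounds, each with a different connected cluster $c'$, and each updating both $\boldsymbol e$ and $\boldsymbol u$. Unrolling,
\begin{align*}
\boldsymbol\theta_{(\tau+1)C}-\boldsymbol\theta_{\tau C}=-\sum_{s=0}^{C-1}\frac{1}{J}\sum_{j\in\mathcal U_{c'_s}}\big[\eta^{\mathrm E}\mathbf g^{\mathrm E}_{\tau C+s,c'_s,j};\ \eta^{\mathrm U}\mathbf g^{\mathrm U}_{\tau C+s,c'_s,j}\big].
\end{align*}
Apply Assumption~\ref{ass:smooth} in descent-lemma form between $\boldsymbol\theta_{\tau C}$ and $\boldsymbol\theta_{(\tau+1)C}$. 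This gives $F(\boldsymbol\theta_{(\tau+1)C})\le F(\boldsymbol\theta_{\tau C})+\langle\nabla F(\boldsymbol\theta_{\tau C}),\Delta\rangle+\frac L2\|\Delta\|^2$.

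\textbf{Inner-product term.} Write each stochastic gradient as $\nabla F(\boldsymbol\theta_{\tau C})$ plus two corrections: the drift $\nabla F(\boldsymbol\theta_{\tau C+s})-\nabla F(\boldsymbol\theta_{\tau C})$ and the error $\boldsymbol\delta$. The leading part contributes $-C\eta^{\mathrm E}\|\nabla_{\boldsymbol e}F\|^2-C\eta^{\mathrm U}\|\nabla_{\boldsymbol u}F\|^2\le -C\eta^{\mathrm E}\|\nabla F(\boldsymbol\theta_{\tau C})\|^2$, using $\eta^{\mathrm E}\le\eta^{\mathrm U}/\gamma\le\eta^{\mathrm U}$ since $\gamma\ge1$.

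For the $\boldsymbol\delta$ terms we use Assumption~\ref{ass:var-baseline}, which says the aggregate over a cycle is unbiased. The subtlety is that each single round sees only one cluster, so a per-round bias remains. We would bound the cross term by Young's inequality, $|\langle a,b\rangle|\le \frac12\|a\|^2+\frac12\|b\|^2$, absorbing it into the variance budget $\zeta_E^2+\sigma_E^2$. This is exactly why heterogeneity $\zeta_E^2$ appears in the baseline bound but not in the EMS-FL bound, where $\gamma_c$-scaled unbiasedness holds per cluster.

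The drift term is controlled by $L\|\boldsymbol\theta_{\tau C+s}-\boldsymbol\theta_{\tau C}\|$. Each round moves the model by at most $\eta^{\mathrm E}$ times a quantity of order $G_E+\sqrt{\zeta_E^2+\sigma_E^2}$, plus $\eta^{\mathrm U}$ times a quantity of order $G_U+\sigma_U$, using Assumption~\ref{ass:bounded-grad}. Over $s\le C$ rounds this yields a contribution of order
\begin{align*}
LC^2\big[(\eta^{\mathrm E})^2(G_E^2+\zeta_E^2+\sigma_E^2)+(\eta^{\mathrm U})^2(G_U^2+\sigma_U^2)\big].
\end{align*}

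\textbf{Quadratic term.} Bound $\frac L2\|\Delta\|^2$ by Jensen's inequality over the $C$ rounds and the $J$ devices. This gives $\frac{LC^2}{2}[\cdot]$ with the same bracket as above. Collecting everything, the constants add up to the factor $5$: one from the quadratic term, two from drift, and two from the error cross terms via Young's inequality.

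\textbf{Telescoping and final rate.} We now have
\begin{align*}
C\eta^{\mathrm E}\,\mathbb E\|\nabla F(\boldsymbol\theta_{\tau C})\|^2\le \mathbb E[F(\boldsymbol\theta_{\tau C})-F(\boldsymbol\theta_{(\tau+1)C})]+5LC^2[\cdots].
\end{align*}
Sum over $\tau=0,\dots,T-1$, bound $F(\boldsymbol\theta_{TC})\ge F^\star$, and divide by $TC\eta^{\mathrm E}$. Then set $\eta^{\mathrm E}=1/\sqrt T$. With $\eta^{\mathrm U}\le\gamma/\sqrt T$, the ratio $(\eta^{\mathrm U})^2/\eta^{\mathrm E}\le\gamma^2/\sqrt T$, which produces the $\gamma^2(G_U^2+\sigma_U^2)$ factor. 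The ratio $(\eta^{\mathrm E})^2/\eta^{\mathrm E}\le1/\sqrt T$ gives the expert term. The first term is $\mathbb E[F(\boldsymbol\theta_0)-F^\star]/(C\sqrt T)$, with no $\gamma$, matching the statement.

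\textbf{Main obstacle.} The delicate step is the per-round bias of single-cluster gradients. Since Assumption~\ref{ass:var-baseline} guarantees unbiasedness only on average across all clusters, we cannot make the cross term vanish in expectation. We would handle it by Young's inequality as above, paying a $\zeta_E^2$ term at order $\eta^2$. If that fails to close the $\eta^2$ accounting, the fallback is to split the inner product as $\eta\langle\nabla F,\bar\delta\rangle\le\frac{\eta}{2}\|\nabla F\|^2+\frac{\eta}{2}\|\bar\delta\|^2$. This loses half the descent coefficient, which would be absorbed into the constants.
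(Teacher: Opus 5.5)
There is a genuine gap at exactly the step you flag as the main obstacle. The error cross term $-\eta^{\mathrm E}\langle\nabla_{\boldsymbol e}F(\boldsymbol\theta_{\tau C}),\boldsymbol\delta^{\mathrm E}_{\tau C+s,c'_s,j}\rangle$ is first order in the step-size, and no weighting of Young's inequality turns it into an $O((\eta^{\mathrm E})^2)$ term. If you keep a descent coefficient of order $\eta^{\mathrm E}$ on $\|\nabla F\|^2$, the residual is at least of order $\eta^{\mathrm E}\|\boldsymbol\delta^{\mathrm E}\|^2$. Your fallback, for instance, leaves $\tfrac{C\eta^{\mathrm E}}{2}(\zeta_E^2+\sigma_E^2)$ per cycle, with no factor $L$. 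If you instead move the step-size onto $\boldsymbol\delta$, the $\eta$-free term $\tfrac12\|\nabla F\|^2$ swamps the descent term $-C\eta^{\mathrm E}\|\nabla F\|^2$. In the first case, dividing by $TC\eta^{\mathrm E}$ leaves a $T$-independent floor of order $\zeta_E^2+\sigma_E^2$, so the right-hand side does not decay like $1/\sqrt T$. Your claimed ``two units from the error cross terms'' in the constant $5LC^2$ is therefore not justified, and the argument as proposed does not establish the theorem.

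The paper never faces this cross term, because it uses Assumption~\ref{ass:var-baseline} as stated. Every device's gradient is taken to be an unbiased estimate of the \emph{global} gradient, $\mathbb E[\mathbf g^{\mathrm E}_{t,c,j}]=\nabla_{\boldsymbol e}F(\boldsymbol e_t,\boldsymbol u_t)$, not merely a cycle aggregate. Expanding per round then gives $\mathbb E\langle\nabla_{\boldsymbol e}F(\boldsymbol e_\tau,\boldsymbol u_\tau),\boldsymbol e_{\tau+1}-\boldsymbol e_\tau\rangle=-\eta^{\mathrm E}\mathbb E\|\nabla_{\boldsymbol e}F(\boldsymbol e_\tau,\boldsymbol u_\tau)\|^2$. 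Consequently $\zeta_E^2+\sigma_E^2$ enters only through the second moment $\mathbb E\|\mathbf g^{\mathrm E}\|^2\le G_E^2+\zeta_E^2+\sigma_E^2$ in the drift and quadratic terms, which are genuinely $O(\eta^2)$ and carry $L$. The paper then moves $\mathbb E\|\nabla_{\boldsymbol e}F(\boldsymbol\theta_\tau)\|^2$ back to the cycle start at a further $O(\eta^2)$ cost, and treats the gate the same way after splitting through $(\boldsymbol e_{t+C},\boldsymbol u_t)$.

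Your own decomposition is repaired the same way. With conditional unbiasedness and the tower property, $\mathbb E\langle\nabla F(\boldsymbol\theta_{\tau C}),\boldsymbol\delta_{\tau C+s}\rangle=0$, because $\boldsymbol\theta_{\tau C}$ is determined before round $\tau C+s$. If you insist on the weaker reading that only the cycle aggregate is unbiased, the missing idea is the once-per-cycle visiting order. With $F_c=\frac1J\sum_{j\in\mathcal U_c}F_{c,j}$, the biases anchored at the cycle start cancel exactly: $\sum_{s=0}^{C-1}\bigl(\nabla F_{c'_s}(\boldsymbol\theta_{\tau C})-\nabla F(\boldsymbol\theta_{\tau C})\bigr)=\boldsymbol 0$. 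The remainder is controlled by $\|\boldsymbol\theta_{\tau C+s}-\boldsymbol\theta_{\tau C}\|=O(\eta)$, hence is $O(\eta^2)$ overall, although this needs smoothness of each $F_c$, which is not assumed.

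Relatedly, $\zeta_E^2$ appears in the baseline bound through the inflated second moment, not through a bias cross term. Your joint use of the descent lemma on $\boldsymbol\theta$, and your final telescoping with $\eta^{\mathrm E}=1/\sqrt T$ and $\eta^{\mathrm U}\le\gamma/\sqrt T$, are otherwise fine.
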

\begin{proof}
    See Appendix B.
\end{proof}

Therefore, a comparison on theoretical convergence performance of EMS-FL and the baseline scheme is provided by Theorem \ref{thm:emsfl} and Theorem \ref{thm:baseline}. Both schemes achieve convergence after sufficient training, while at different rates, i.e., $\frac{1}{\gamma C\sqrt{T}}$ and $\frac{1}{C\sqrt{T}}$. Specifically, we further consider two extreme cases with $\gamma = 1$ and $\gamma = C$, corresponding to uniform data distribution and extreme heterogeneity where all data correlated with an given expert group reside in a single device cluster. When $\gamma = 1$, the convergence rates turn to be the same for both schemes, while EMS-FL suffers from a larger deviation in the second term due to  gradient errors. This is because each expert group is correlated to the data at all device clusters while only trained by a specific device cluster in EMS-FL, leading to under-utilization of the data resource and thereby causing slower convergence. When $\gamma = C$, EMS-FL achieves a noticeably faster convergence at rate $\frac{1}{C^2\sqrt{T}}$. Though in EMS-FL the deviation in the second term is amplified due to the scaling effect in \eqref{eq:stochastic-gradients-E-error}, the baseline scheme is penalized by an additional factor $\zeta_E^2$, which is large for severe data heterogeneity. Hence, EMS‑FL delivers a superior convergence rate in scenarios with highly heterogeneous data distribution. In fact, as clarified in Remark \ref{remark-expert-relevance}, it is usually the case in practice that each expert is correlated with specific data modalities, making $\gamma \gg 1$. And the data heterogeneity can be coarsely evaluated in advance using the expert assignment probability in \eqref{assignment-probability}, helping us choose the correct solution.

\begin{rem}[Idle Connection Slots]
    Note that we assume consecutive training across device clusters, where an orbital cycle exactly spans $C$ training rounds. In practice, however, there can be numerous idle connection slots where no device cluster lies in the coverage of the satellite. This further slows down the training process in the baseline scheme, while not affecting EMS-FL that applies asynchronous local training. Therefore, for the case with abundant idle connection slots, EMS-FL remains applicable and maintains high training efficiency. 
\end{rem}

\section{Simulation Results}\label{sec:sim_results}
In this section, we provide evaluations on the performance of the proposed EMS-FL through simulations. Two experiments are carried out with different settings, corresponding to MoE fine-tuning and training tasks, respectively. For EMS-FL, we use the enhanced version due to its lightweight properties.

\begin{table}[t]
\centering
\caption{Communication overhead and peak GPU memory with LoRA ($r=4$).}
\label{tab:cost}
\footnotesize
\setlength{\tabcolsep}{6pt}
\renewcommand{\arraystretch}{1.10}

\sisetup{
  detect-all,
  mode=text,
  table-number-alignment=center,
  table-text-alignment=center
}

\begin{tabular}{l
                S[table-format=2.2]
                S[table-format=2.2]
                S[table-format=2.2]}
\toprule
Method &

\multicolumn{1}{c}{Uplink Overhead (MB)} &
\multicolumn{1}{c}{Peak GPU Memory (GB)} \\
\midrule
EMS-FL  & 14.16 & 23.23 \\
Baseline                  & 43.25 & 46.00 \\
\midrule
\multicolumn{1}{l}{Ratio / Diff.} &
\multicolumn{1}{c}{$\approx 3.06\times$} &
\multicolumn{1}{c}{+22.77} \\
\bottomrule
\end{tabular}
\end{table}

\subsection{Experimental Settings}
\label{sec:sim_settings}
The default settings are as follows unless specified otherwise. We consider a scenario as described in Sec.~\ref{subsec:comm}. An LEO satellite serves as the server for model aggregation and global model updates. At each time slot, only one device cluster is connected to the satellite with an uplink bandwidth of 5 GHz in total, which can support 1000 devices each with a 5 Mbps transmission rate under the setting of a 23 dBm transmit power, a -160 dB large-scale fading (600 km communication distance), a 40 dBi antenna gain at the satellite, and a 5 MHz sub-channel bandwidth. This enables the uploading of a 300 MB file during a 10-minute connection phase. In our experiments, each expert is an 8B large model, which contributes a 10-20 MB gradient parameter file after using LoRA with rank 4.
The downlink transmission process is neglected due to the high-capacity broadcast capability of the satellite. The local expert training at each device cluster is simulated using a single 80G A100 GPU. 
In the following, we further present the training details and the corresponding simulation results of the two experiments. 

\begin{figure}[t]
        \centering

        \subfigure[Fine-tuning process with step-size $\eta^\mathrm{E} = 1\times10^{-5}$.]{\includegraphics[height=5cm, width=8cm]{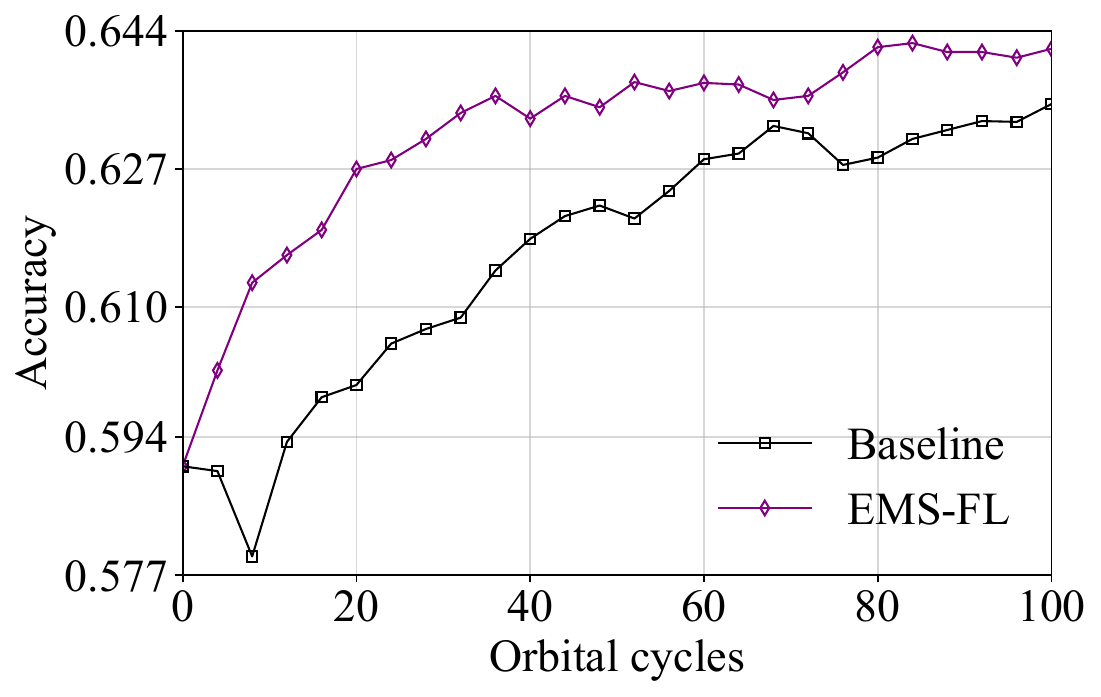}}
        \subfigure[Fine-tuning process with step-size $\eta^\mathrm{E} = 6\times10^{-6}$.]{\includegraphics[height=5cm, width=8cm]{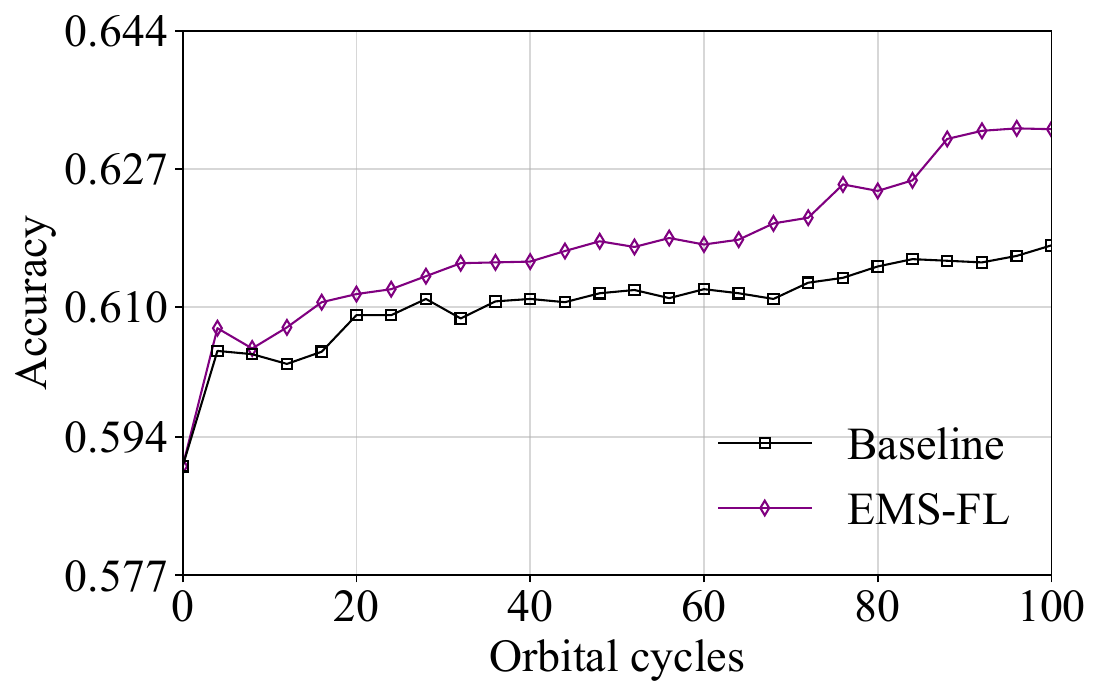}}
        \subfigure[Fine-tuning process with step-size $\eta^\mathrm{E} = 1\times10^{-6}$.]{\includegraphics[height=5cm, width=8cm]{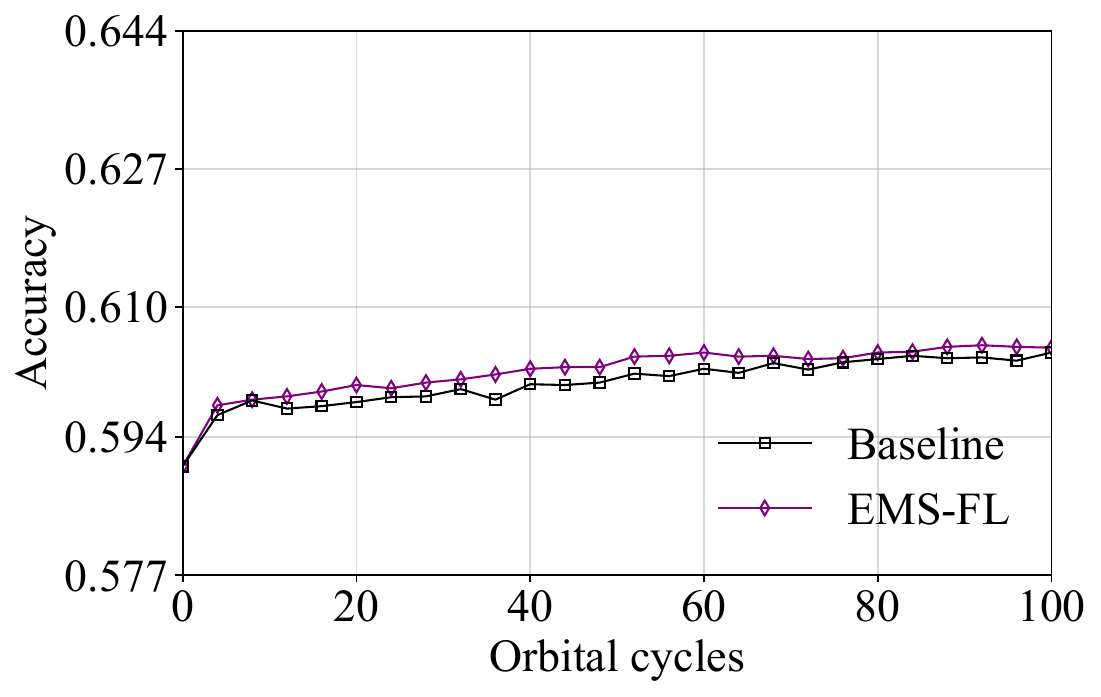}}
    \caption{Test accuracy versus number of orbital cycles for EMS-FL and baseline scheme under different step-sizes.}
    \label{ex1_dflr}
\end{figure}

\subsection{Experiment 1: MoE Fine-tuning with Top-1 Routing}\label{experiment1}
In this experiment, we consider a fine-tuning task where an MoE model is constructed by integrating three  pre-trained Llama-3-8B variants: Llama-3-8B-Instruct~\cite{meta_llama3_8b_instruct_2024},
Llama-3-8B-Chinese-Chat~\cite{wang_llama3_8b_chinese_chat_2024}, and
Llama-3-OpenBioLLM-8B~\cite{pal_llama3_openbiollm_8b_2024}. Specifically, the expert networks in MoE layer $\ell$ are initialized with feed-forward networks (FFNs) from the $\ell$-th transformer layers of the three Llama-3-8B models. A gate is introduced with Top-1 routing, which is randomly initialized and pre-trained for a few iterations (60 rounds) on a small set of shared data samples to ensure proper model functionality before tuning.
As we consider the case with $C \le M$, three device clusters are involved in this fine-tuning task. Furthermore, three open datasets, i.e., MMLU~\cite{hendrycks2021measuring} for general knowledge, CMExam~\cite{liu2023cmexam} for Chinese-specific medical knowledge, and MedMCQA~\cite{pal2022medmcqa} with English-specific medical knowledge, are used as local datasets at the three device clusters, respectively. 
Note that there is naturally a one-to-one mapping between the the experts and local datasets due to their inherent correlations. For example, the model Llama-3-8B-Chinese-Chat is highly correlated to the dataset CMExam and is thus assigned to the second device cluster according to the expert-driven model splitting algorithm.
All three local datasets are composed of multiple-choice question answering (MCQA) samples. In the tuning process, for each sample, we compute the length-normalized conditional log-likelihood of every option given the question prompt, selecting the option with the highest log-likelihood as the prediction \cite{Brown2020GPT3}. 
During fine-tuning, we employ cross-entropy loss between the predictions and the labels.

The convergence performance of both the EMS-FL and baseline schemes is shown in Fig. \ref{ex1_dflr}, trained with step-sizes $\eta^\mathrm{E}$ ranging from $10^{-6}$ to $10^{-5}$, illustrating the accuracy improvement process during $100$ orbital cycles ($300$ training rounds). Specifically, the accuracy is calculated with respect to the global dataset. In Fig. \ref{ex1_dflr}, the proposed EMS-FL always features faster convergence under different step-sizes and even attains a higher accuracy, which aligns with the convergence analysis in Sec. \ref{sec:convergence}, validating the superiority of EMS-FL in convergence performance. Taking the results in Fig. \ref{ex1_dflr}(a) as an example, with step-size $\eta^\mathrm{E} = 10^{-5}$, EMS-FL reaches an accuracy of 63\% in around 30 orbital cycles while the baseline scheme requires more than 60 orbital cycles, representing a 50\% reduction in training time. Moreover, Fig. \ref{ex1_compare} also compares the results obtained with different step-sizes for each scheme, indicating that a step-size of $\eta^\mathrm{E} = 1\times10^{-5}$ yields the fastest convergence. A smaller step-size leads to stable but slow accuracy improvement, while a large step-size brings in severe fluctuations that may cause an initial drop in accuracy.

\begin{figure}[t]
        \centering
        \subfigure[Fine-tuning process of the baseline scheme.]{\includegraphics[height=5cm, width=8cm]{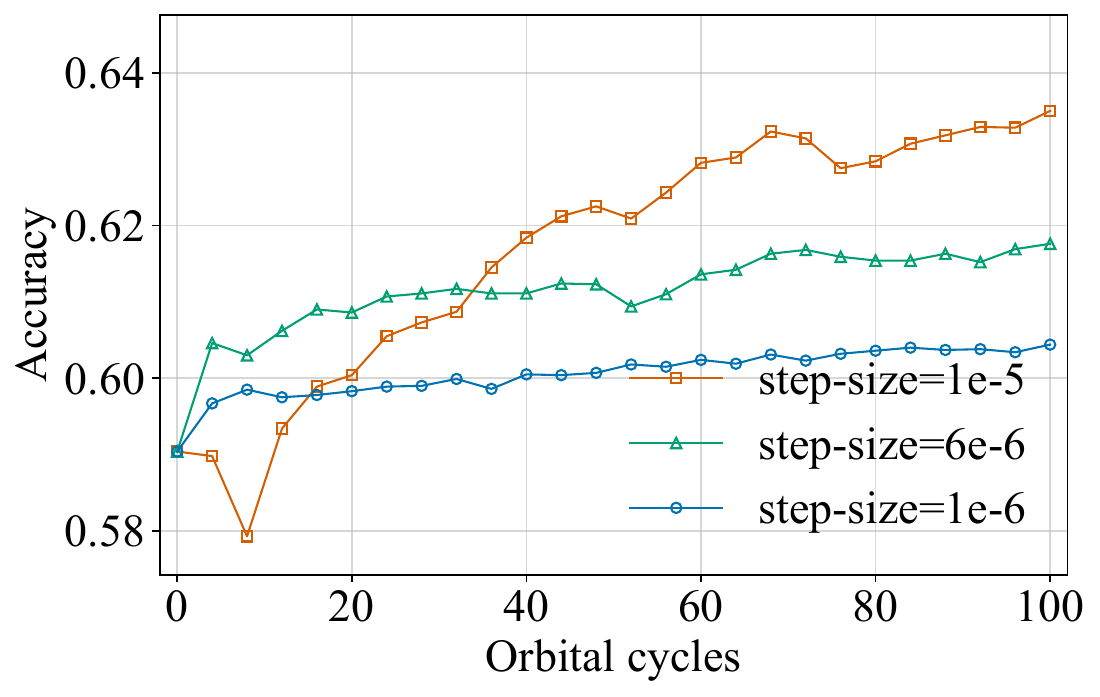}}
        \subfigure[Fine-tuning process of EMS-FL.]{\includegraphics[height=5cm, width=8cm]{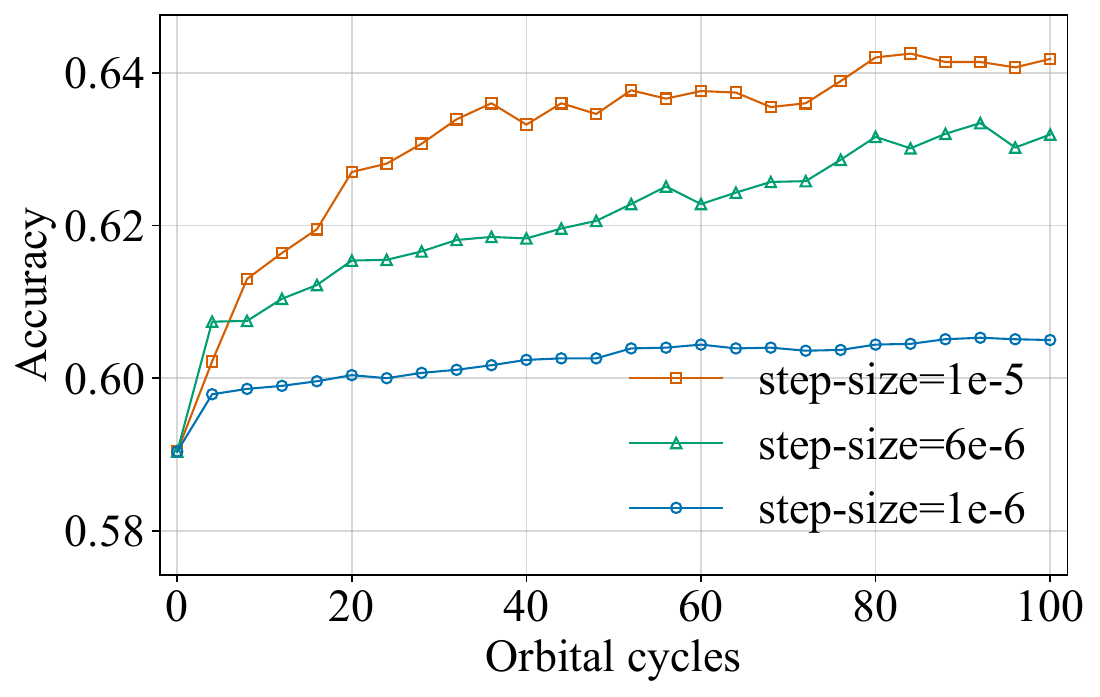}}
    \caption{Test accuracy under different step-sizes.}
    \label{ex1_compare}
\end{figure}

\begin{figure}[t]
	\centering
	\includegraphics[height=5cm, width=8cm]{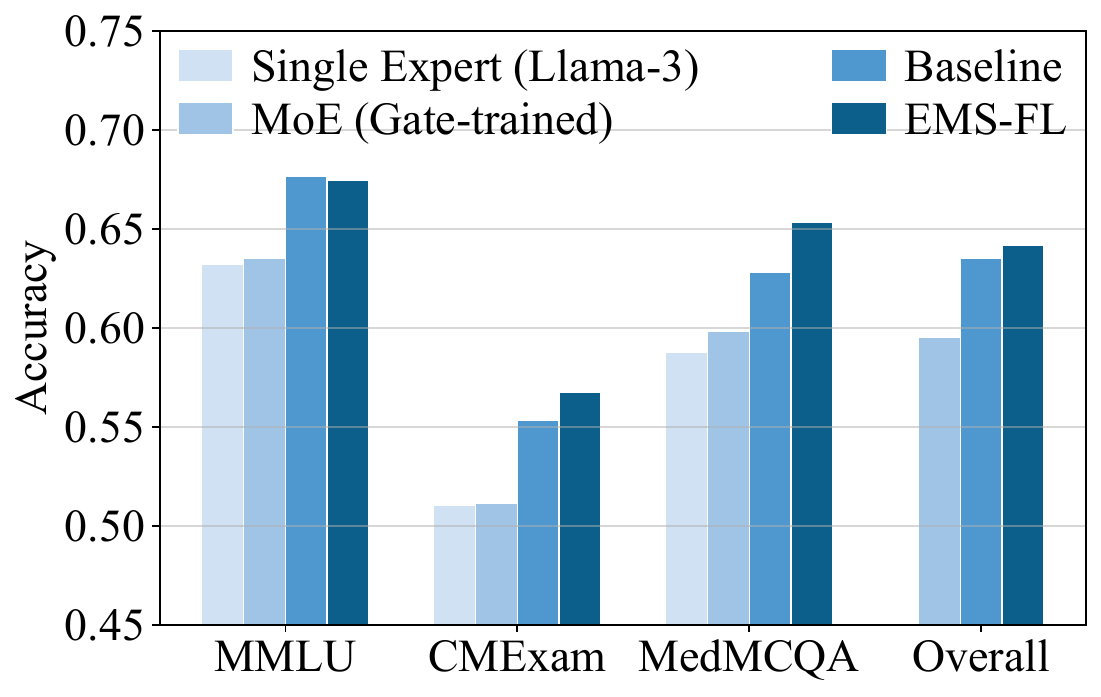}
	\caption{Test accuracy over different datasets before and after fine-tuning.}\label{ex1_inference}
\end{figure}

\begin{table*}[t]
\caption{Comparison of model capabilities before and after fine-tuning.}
\label{tab:gen-qualitative}
\centering
\justify
\footnotesize
\setlength{\tabcolsep}{8pt}
\renewcommand{\arraystretch}{1.18}
\sloppy 
\begin{tabularx}{\textwidth}{@{} L{5.0cm} L{5.0cm} L{6.7cm} @{}}
\toprule
\textbf{Prompt / Ground Truth} & \textbf{Before Fine-tuning} & \textbf{After Fine-tuning} \\
\midrule

\vspace{-0.4cm}
\justify{\textbf{Prompt:} List some vitamins that can only be supplied from animal source and explain the reason. } 
\vspace{-0.2cm}
\justify{\textbf{Ground Truth:} Vitamin B12}
&
\vspace{-0.4cm}
\justify{\underline{Vitamin A, Vitamin D}, and Vitamin B12 can only be supplied from animal source because they require fat for absorption, and fat is commonly found in animal products.}
&
\vspace{-0.4cm}
\begin{justify}
\underline{Vitamin B12} is an example of a vitamin that can only be supplied from animal source. Vitamin B12 is synthesized by bacteria in the gastrointestinal tract of animals and is stored in animal tissues such as liver and meat.\vspace{-0.2cm} 
\end{justify}
\\

\midrule

\vspace{-0.4cm}
\justify{\textbf{Prompt:} In chronic alcoholism, what is the rate-limiting component of alcohol metabolism, excluding enzymes? Explain the reason.}
\vspace{-0.2cm}
\justify{\textbf{Ground Truth:} NAD+ }
&
\vspace{-0.4cm}
\justify{The rate limiting component for alcohol metabolism excluding enzymes in chronic alcoholism is \underline{acetaldehyde dehydrogenase}.}
&
\vspace{-0.4cm}
\begin{justify}
In chronic alcoholism, the rate limiting component for alcohol metabolism excluding enzymes is the availability of \underline{NAD+}. This is because chronic alcohol consumption leads to the depletion of NAD+ due to its conversion to NADH during alcohol metabolism. \vspace{-0.2cm} 
\end{justify}
\\
\bottomrule
\end{tabularx}
\end{table*}

Moreover, as described in Sec. \ref{sec:enhanced}, the enhanced EMS-FL also help reduce the memory usage and communication overhead. To quantify this advantage, we record the peak GPU memory usage during the tuning process in Table \ref{tab:cost}, which reflects the memory demand for local devices. 
The results presented in Table \ref{tab:cost} corresponds a maximum input length of 384 tokens.
The results show that EMS-FL achieves a $49.5\%$ reduction in both communication and memory overhead, and only requires to upload a file with size less than 20 MB, which can be easily accomplished in the connected phase.

To demonstrate the capability enhancement achieved through fine-tuning, we compare the model outputs before and after tuning using representative question prompts relevant to the training dataset content, as shown in Table \ref{tab:gen-qualitative}. 
These results indicate that the MoE has effectively improved its inference capabilities, validating the experimental design.

In Fig. \ref{ex1_inference}, we also compare the test accuracy across different models over both the global and local datasets. For each local dataset, four models are involved in the comparison including 1) the most correlated pre-trained Llama-3 model, 2) the MoE model after gate pre-training, 3) the MoE model tuned with the baseline scheme, and 4) the MoE model tuned with EMS-FL. For the global dataset, we only involve the last three models. As shown in Fig. \ref{ex1_inference}, the model tuned with EMS-FL achieves the highest accuracy on CMExam, MedMCQA, and the global dataset, and performs comparably to the baseline scheme on MMLU, demonstrating its superiority. Moreover, tuned with either EMS-FL or the baseline scheme, the resultant MoE models provide apparently higher accuracy in all datasets when compared with the single Llama-3 model, justifying the effectiveness of the MoE architecture and STN-assisted FL. Last but not least, the MoE model after gate pre-training achieves similar accuracy to the single Llama-3 model, indicating that the gate has learned the correlations between experts and local datasets in pre-training and can route the data correctly.

\begin{figure}[t]
	\centering
	\includegraphics[height=5cm, width=8cm]{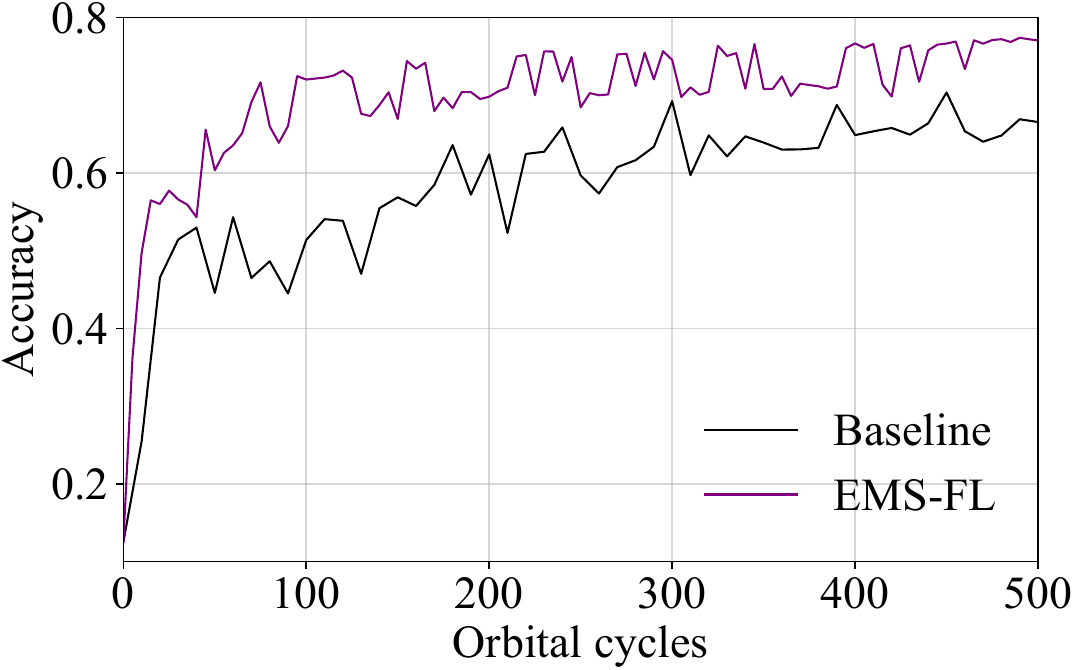}
	\caption{Test accuracy versus number of orbital cycles for EMS-FL and the baseline scheme with step-size $\eta^\mathrm{E} =8\times10^{-6}$.}\label{ex2_training}
\end{figure}

\subsection{Experiment 2: MoE Training with Top-2 Routing}\label{experiment2}
In the second experiment, we consider a training task to show the proposed EMS-FL can achieve favorable performance in both tuning and training scenarios. We employ Mixtral-8x7B-Instruct-v0.1, a 47B MoE model with Top-2 routing and 8 experts. Though LoRA is inapplicable to the training case, which may result in communication infeasibility, the simulation results still provide valuable insights for future research on mini-MoE \cite{Komatsuzaki2023SparseUpcycling} or MoE models with smaller sizes via knowledge distillation or model compression. Four device clusters are involved in this experiment, and four datasets are selected from the GLUE benchmark \cite{wang2019glue} as the corresponding local datasets. The four datasets  exhibit distinct linguistic characteristics: 1) sentence-level judgment (SLJ), 2) semantic similarity judgment (SSJ), 3) natural language inference (NLI), and 4) question-based natural language inference (QA-NLI). 
The model parameters are randomly initialized and trained with a step-size $\eta^\mathrm{E} \!=\! \eta^\mathrm{U} \!=\! 8\!\times\! 10^{-6}$. The experts are assigned to the four device clusters and trained asynchronously. The gate and the shared backbone are trained together by the device cluster connected to the satellite in each training round.
For the training details, we treat all four GLUE subsets as sentence-pair classification tasks, replacing the language modeling head with a classification head and training with cross-entropy loss.

\begin{figure}[t]
	\centering
	\includegraphics[height=5cm, width=8cm]{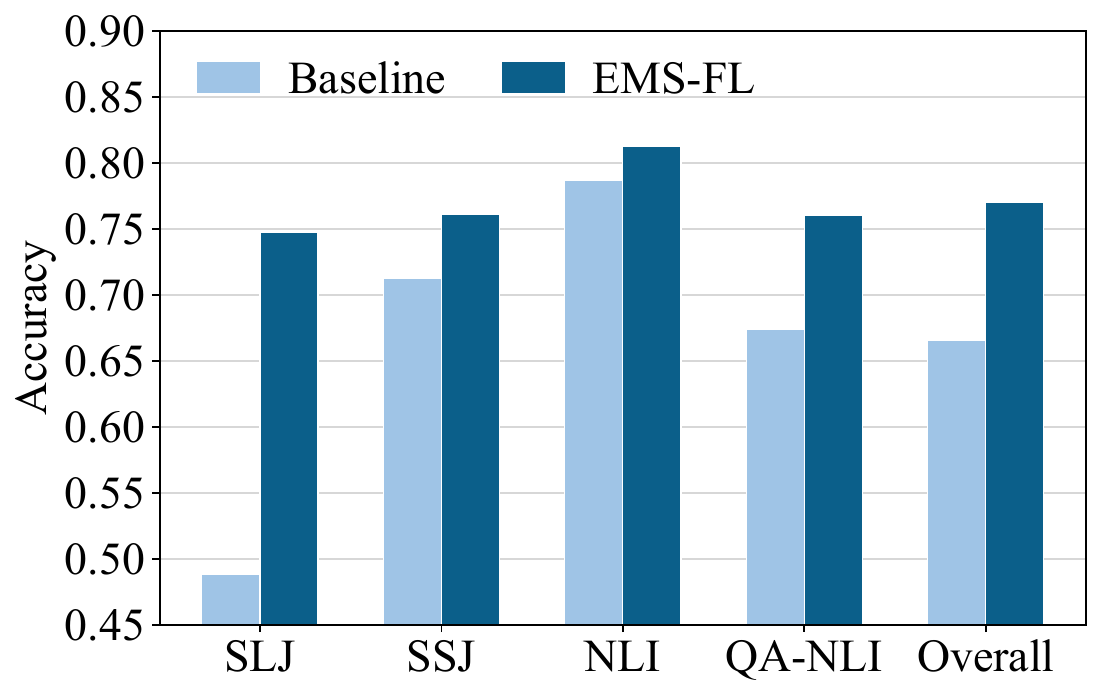}
	\caption{Test accuracy over different datasets after training.}\label{Ex2_inference}
\end{figure}

The comparison of convergence performance between EMS-FL and the baseline scheme is presented in Fig. \ref{ex2_training}. The accuracy is evaluated on the global dataset across $500$ orbital cycles. It is apparent that EMS-FL achieves a more rapid accuracy improvement and also a much faster convergence compared with the baseline scheme. Specifically, the proposed EMS-FL reaches the accuracy $70\%$ within 400 training rounds, while it takes over 1600 training rounds for the baseline scheme, representing a $75\%$ reduction on the training time.

We also compare the test accuracy of EMS-FL and the baseline scheme using both the global and local datasets, as shown in Fig. \ref{Ex2_inference}, where the proposed EMS-FL achieves notably higher accuracy over all datasets. Although such a training experiment may be infeasible in STN scenarios without LoRA, this suggests the potential of EMS-FL for training smaller MoE models, such as mini-MoE \cite{Komatsuzaki2023SparseUpcycling}, in future applications.

\section{Conclusion}\label{sec:conclusion}
In this work, we propose EMS-FL, an STN-assisted federated learning method that enables efficient MoE tuning in distributed scenarios by innovative designs of expert-driven model splitting and asynchronous local training. The superiority of EMS-FL in lightweight training, rapid convergence, and high accuracy is demonstrated through both theoretical analysis and comprehensive experiments. This work shows the possibility and provides a practical solution to distributed large model tuning by exploiting the assistance of STN. We believe it also sheds light on future designs of efficient model tuning and inference, potentially enhancing applications such as intelligent driving, AI agents, and embodied intelligence.

\section{Appendix}

\subsection{Proof of Theorem \ref{thm:emsfl}}\label{proof-theo1}
For any global training round $t$, after $C$ cluster contacts (from $t$ to $t+C$), we have
\begin{align}\label{eq:loss-decomp-orbit}
    & F(\boldsymbol e_{t+C},\boldsymbol u_{t+C})  - F(\boldsymbol e_t,\boldsymbol u_t)   \nonumber \\
    =& \underbrace{F(\!\boldsymbol e_{t+C},\boldsymbol u_t\!)\! -\! F(\!\boldsymbol e_t,\boldsymbol u_t\!)
    }_{\text{expert update}}
    \!+\!  \underbrace{ F(\!\boldsymbol e_{t+C},\boldsymbol u_{t+C}\!) \!-\! F(\boldsymbol e_{t+C},\boldsymbol u_t)}_{\text{gate update}}.
\end{align}
We first focus on the expert update part. Since ${\boldsymbol e}_t = [ \tilde{\boldsymbol e}_{t,1}^T,       \tilde{\boldsymbol e}_{t,2}^T,       \ldots,       \tilde{\boldsymbol e}_{t,C}^T    ]^T$, we have the following decomposition using Assumption~\ref{ass:smooth}
\begin{align}\label{eq-loss-decomp-term1}
    \!\!\!&F({\boldsymbol e}_{t+C},\boldsymbol u_t) - F({\boldsymbol e}_t,\boldsymbol u_t)
    \nonumber \\
    \!\!\!\le&  \langle \nabla_{{\boldsymbol e}} F({\boldsymbol e}_t,\boldsymbol u_t),
                {\boldsymbol e}_{t+C}-{\boldsymbol e}_t\rangle + \frac{L}{2}\|{\boldsymbol e}_{t+C}-{\boldsymbol e}_t\|^2  \nonumber \\
    \!\!\!=&\!   \sum_{c=1}^C\ [  \langle \nabla_{\tilde{\boldsymbol e}_c} \!F({\boldsymbol e}_t,\boldsymbol u_t), \tilde{\boldsymbol e}_{t+C,c}\!\!-\!\tilde{\boldsymbol e}_{t,c}\rangle
     \!+\! \frac{L}{2}\!\|\tilde{\boldsymbol e}_{t+C,c}\!\!-\!\tilde{\boldsymbol e}_{t,c}\|^2 ].
\end{align}
For the first term, its expectation satisfies
\begin{align}\label{t-to-tprime-gap}
    &\mathbb{E}\langle    \nabla_{\tilde{\boldsymbol e}_{c}}F({\boldsymbol e}_t,\boldsymbol u_t),
       \tilde{\boldsymbol e}_{t+C,\,c}-\tilde{\boldsymbol e}_{t,c}  \rangle    \nonumber \\
    =& \mathbb{E}\langle    \nabla_{\tilde{\boldsymbol e}_{c}}F({\boldsymbol e}_{t^\prime},\boldsymbol u_{t^\prime}),  \tilde{\boldsymbol e}_{{t^\prime}+C,c}-\tilde{\boldsymbol e}_{{t^\prime},c}  \rangle \nonumber\\
    &+ \mathbb{E}\langle    \nabla_{\tilde{\boldsymbol e}_{c}}F({\boldsymbol e}_t,\boldsymbol u_t) - \nabla_{\tilde{\boldsymbol e}_{c}}F({\boldsymbol e}_{t^\prime},\boldsymbol u_{t^\prime}),  \tilde{\boldsymbol e}_{{t^\prime}+C,c}-\tilde{\boldsymbol e}_{{t^\prime},c}  \rangle \nonumber\\
    \le& \mathbb{E}\langle    \nabla_{\tilde{\boldsymbol e}_{c}}F({\boldsymbol e}_{t^\prime},\boldsymbol u_{t^\prime}),  \tilde{\boldsymbol e}_{{t^\prime}+C,c}-\tilde{\boldsymbol e}_{{t^\prime},c}  \rangle \nonumber\\
    &+ L\| ({\boldsymbol e}_t, {\boldsymbol u_t}) - ({\boldsymbol e}_{t^\prime}, {\boldsymbol u}_{t^\prime}) \| \|\tilde{\boldsymbol e}_{{t^\prime}+C,c}-\tilde{\boldsymbol e}_{{t^\prime},c}\| \nonumber \\
    \le& \mathbb{E}\langle    \nabla_{\tilde{\boldsymbol e}_{c}}F({\boldsymbol e}_{t^\prime},\boldsymbol u_{t^\prime}),  \tilde{\boldsymbol e}_{{t^\prime}+C,c}-\tilde{\boldsymbol e}_{{t^\prime},c}  \rangle \nonumber\\
    &+ \!(\eta^{\mathrm E} \gamma)^2 \!LC^3(G_E^2 \!+\! \sigma_E^2) \!+\! \frac{1}{2} (\eta^{\mathrm U})^2\!LC^2(G_U^2 \!+\! \sigma_U^2),
\end{align}
with $C\ge2$, where ${t^\prime}$ is the last connected round for each cluster $c$ where $\tilde{\boldsymbol e}_c$ in the global model is kept frozen for $C$ rounds since then.
We further decompose $\mathbb{E}\langle    \nabla_{\tilde{\boldsymbol e}_{c}}F({\boldsymbol e}_{t^\prime},\boldsymbol u_{t^\prime}),  \tilde{\boldsymbol e}_{{t^\prime}+C,c}-\tilde{\boldsymbol e}_{{t^\prime},c}  \rangle$ as
\begin{align}\label{eq:ex_step2}
    &\mathbb{E}\langle    \nabla_{\tilde{\boldsymbol e}_c}F({\boldsymbol e}_{t^\prime},\boldsymbol u_{t^\prime}),
       \tilde{\boldsymbol e}_{t^\prime+C,c}-\tilde{\boldsymbol e}_{t^\prime,c}  \rangle    \nonumber \\
    =& \frac{1}{J}\sum_{\tau=t^\prime}^{t^\prime+C-1} \sum_{j\in\mathcal U_c}  \mathbb{E}\langle
       \nabla_{\tilde{\boldsymbol e}_c}F({\boldsymbol e}_{t^\prime},\boldsymbol u_{t^\prime}),
       -\eta^{\mathrm E} \gamma\mathbf g^{\mathrm E}_{\tau,c,j}  \rangle,
\end{align}
The multiplication can be rewritten as
\begin{align}
    &\mathbb{E}\langle \nabla_{\tilde{\boldsymbol e}_c} F({\boldsymbol e}_{t^\prime},\boldsymbol u_{t^\prime}), -\eta^{\mathrm E} \gamma\mathbf g^{\mathrm E}_{\tau,c,j}  \rangle \nonumber\\
    =& \mathbb{E}[\nabla_{\tilde{\boldsymbol e}_c} F({\tilde{\boldsymbol e}}_{\tau,c,j},\boldsymbol u_{t^\prime})({\tilde{\boldsymbol e}}_{\tau+1,c,j} - {\tilde{\boldsymbol e}}_{\tau,c,j}) \nonumber\\
    &+ \left(\nabla_{\tilde{\boldsymbol e}_c} F({\boldsymbol e}_{t^\prime},\boldsymbol u_{t^\prime})   -   \nabla_{\tilde{\boldsymbol e}_c} F({\tilde{\boldsymbol e}}_{\tau,c,j},\boldsymbol u_{t^\prime})\right)({\tilde{\boldsymbol e}}_{\tau+1,c,j} - {\tilde{\boldsymbol e}}_{\tau,c,j})]\nonumber\\
    \le& -\! \eta^{\mathrm E}\!\gamma\mathbb{E}[\nabla_{\tilde{\boldsymbol e}_c}^2
    \!F(\!{\tilde{\boldsymbol e}}_{\tau,c,j},\boldsymbol u_{t^\prime}\!) \!+\!\! L\|{\tilde{\boldsymbol e}}_{t^\prime,c} \!\!-\!  {\tilde{\boldsymbol e}}_{\tau,c,j}\| \|{\tilde{\boldsymbol e}}_{\tau\!+\!1,c,j} \!\!-\! {\tilde{\boldsymbol e}}_{\tau,c,j} \|  ] \nonumber\\
    \le& - \eta^{\mathrm E} \gamma\mathbb{E}[\nabla_{\tilde{\boldsymbol e}_c}^2 F({\tilde{\boldsymbol e}}_{\tau,c,j},\boldsymbol u_{t^\prime})] + (\eta^{\mathrm E} \gamma)^2 L C (G_E^2 + \sigma_E^2).  \nonumber
\end{align}
Since
\begin{align}
   &\| \nabla_{\tilde{\boldsymbol e}_c} \!F({\tilde{\boldsymbol e}}_{\tau,c,j},\boldsymbol u_{t^\prime}) \| \nonumber\\
   \ge& \| \nabla_{\tilde{\boldsymbol e}_c}\! F({{\boldsymbol e}}_{t},\boldsymbol u_{t})  \| \!-\! L\|({\tilde{\boldsymbol e}}_{t,c},\boldsymbol u_{t}) - ({\tilde{\boldsymbol e}}_{\tau,c,j}, \boldsymbol u_{t^\prime})\|,
\end{align}
which gives
\begin{align}
   \!\!&\mathbb{E}[\| \nabla_{\tilde{\boldsymbol e}_c}^2 F({\tilde{\boldsymbol e}}_{\tau,c,j},\boldsymbol u_{t^\prime}) \|] \nonumber\\
   \ge & \mathbb{E}[\| \nabla_{\tilde{\boldsymbol e}_c}^2 F({{\boldsymbol e}}_{t},\boldsymbol u_{t})  \| - 2LG_E \|({\tilde{\boldsymbol e}}_{t,c}, \boldsymbol u_{t}) \!-\!  ({\tilde{\boldsymbol e}}_{\tau,c,j}, \boldsymbol u_{t^\prime})\|] \nonumber\\
   \!\!\ge& \mathbb{E}[\| \nabla_{\tilde{\boldsymbol e}_c}^2\! F(\!{{\boldsymbol e}}_{t},\boldsymbol u_{t}\!)  \|] \!-\! 2LC\eta^{\mathrm E} \gamma[  G_{E}^2 \!+\! \sigma_{E}^2 \!+\! (\!\frac{\eta^{\mathrm U}}{\eta^{\mathrm E} \gamma}\!)^2 (\!G_{U}^2 \!+\! \sigma_{U}^2\!)],\nonumber
\end{align}
where we use the inequality
\begin{align}
E[\|X\|]\!=\!\sqrt{\!(E[\|X\| \!\cdot\! 1])^2} 
    \!\le\! \sqrt{\!E[\|X\|^2] \!\cdot\! E\left[1^2\right]}
    \!=\! \sqrt{\!E[\|X\|^2]}, \nonumber
\end{align}
for an arbitrary variable $X$ according to the Cauchy-Schwarz inequality.
Therefore we have
\begin{align}
&\mathbb{E}\langle \nabla_{\tilde{\boldsymbol e}_c} F({\boldsymbol e}_{t^\prime},\boldsymbol u_{t^\prime}), -\eta^{\mathrm E} \gamma\mathbf g^{\mathrm E}_{\tau,c,j}  \rangle \nonumber\\
\le&  - \!\eta^{\mathrm E} \gamma\mathbb{E}[\nabla_{\tilde{\boldsymbol e}_c}^2 F({{\boldsymbol e}}_{t},\boldsymbol u_{t})] \nonumber \\
&+ 3(\eta^{\mathrm E} \gamma)^2 LC (G_E^2 + \sigma_E^2) + 2(\eta^{\mathrm U})^2 LC (G_U^2 + \sigma_U^2).
\end{align}
Then the result in  \eqref{eq:ex_step2} can be bounded as
\begin{align}\label{first-term-result}
&\mathbb{E}\langle    \nabla_{\tilde{\boldsymbol e}_c}F({\boldsymbol e}_{t^\prime},\boldsymbol u_{t^\prime}), \tilde{\boldsymbol e}_{t^\prime+C,c}-\tilde{\boldsymbol e}_{t^\prime,c}  \rangle    \nonumber \\
\le& - \eta^{\mathrm E} \gamma C  \mathbb{E}[\nabla_{\tilde{\boldsymbol e}_c}^2 F({{\boldsymbol e}}_{t},\boldsymbol u_{t})] \nonumber\\
&+ \!3(\eta^{\mathrm E} \gamma)^2 C^2 \!L (G_E^2 \!+\! \sigma_E^2) \!+\! 2(\eta^{\mathrm U})^2\! LC^2 (G_U^2 \!+\! \sigma_U^2).
\end{align}
Moreover, the second term in \eqref{eq-loss-decomp-term1} is directly bounded by
\begin{align}\label{second-term-result}
    \frac{L}{2}\mathbb{E}[\|\tilde{\boldsymbol e}_{t+C,c}-\tilde{\boldsymbol e}_{t,c}\|^2] \le \frac{1}{2}(\eta^{\mathrm E} \gamma)^2 C^2 L (G_E^2 + \sigma_E^2).
\end{align}
Plugging \eqref{t-to-tprime-gap}, \eqref{first-term-result} and \eqref{second-term-result} into \eqref{eq-loss-decomp-term1} gives
\begin{align}\label{upper-bound-term1}
    &\mathbb{E}[F({\boldsymbol e}_{t+C},\boldsymbol u_t) - F({\boldsymbol e}_t,\boldsymbol u_t)]\nonumber\\
    \le& - \eta^{\mathrm E} \gamma C  \mathbb{E}[\nabla_{{\boldsymbol e}}^2 F({{\boldsymbol e}}_t,\boldsymbol u_t)] \nonumber\\
    &+ \!3 (\eta^{\mathrm E} \gamma)^2 C^4\! L (G_E^2 \!+\! \sigma_E^2) \!+\! 3(\eta^{\mathrm U})^2\!LC^3(G_U^2 \!+\! \sigma_U^2),
\end{align}
for $C\ge2$, which is the upper bound of the expert update part in \eqref{eq:loss-decomp-orbit}.

Next, for the gate update part in \eqref{eq:loss-decomp-orbit}, we first have
\begin{align}\label{eq:loss-decomp-term2}
    &F({\boldsymbol e}_{t+C},\boldsymbol u_{t+C})     - F({\boldsymbol e}_{t+C},\boldsymbol u_t)    \nonumber \\
    &=    \sum_{\tau=t}^{t+C-1}    [       F({\boldsymbol e}_{t+C},\boldsymbol u_{\tau+1})     - F({\boldsymbol e}_{t+C},\boldsymbol u_{\tau})    ].
\end{align}
Using $L$-smoothness, we have
\begin{align}\label{eq:gate_step1}
    &F({\boldsymbol e}_{t+C}, \boldsymbol u_{\tau+1}) - F({\boldsymbol e}_{t+C}, \boldsymbol u_\tau)
    \nonumber \\
    \le&    \langle       \nabla_{\boldsymbol u} F({\boldsymbol e}_{t+C}, \boldsymbol u_\tau),\,
       \boldsymbol u_{\tau+1} - \boldsymbol u_\tau
    \rangle     + \frac{L}{2} \|\boldsymbol u_{\tau+1} - \boldsymbol u_\tau\|^2.
\end{align}
Let $c(\tau)$ denote the device cluster in the connected phase in round $\tau$, the first term in \eqref{eq:gate_step1} can be written as
\begin{align}\label{eq:gate_step3}
    & \mathbb{E}\langle         \nabla_{\boldsymbol u} F({\boldsymbol e}_{t+C}, \boldsymbol u_\tau),
        \boldsymbol u_{\tau+1} - \boldsymbol u_\tau    \rangle    \nonumber \\
    =&    \frac{1}{J}   \sum_{j\in\mathcal U_{c(\tau)}} \mathbb{E}  
    \langle     \nabla_{\boldsymbol u} F({\boldsymbol e}_{t+C}, \boldsymbol u_\tau),
        -\eta^{\mathrm U}\mathbf g^{\mathrm U}_{\tau,c(\tau),j}    \rangle, \nonumber\\
    =& \frac{1}{J}   \sum_{j\in\mathcal U_{c(\tau)}}    \mathbb{E}[\langle\nabla_{\boldsymbol u} F({\boldsymbol e}_{\tau}, \boldsymbol u_\tau),    -\eta^{\mathrm U}\mathbf g^{\mathrm U}_{\tau,c(\tau),j}\rangle \nonumber\\
    &+  \langle \nabla_{\boldsymbol u} F({\boldsymbol e}_{t+C}, \boldsymbol u_\tau)-\nabla_{\boldsymbol u} F({\boldsymbol e}_{\tau}, \boldsymbol u_\tau), -\eta^{\mathrm U}\mathbf g^{\mathrm U}_{\tau,c(\tau),j} \rangle]\nonumber\\
    \le& \frac{1}{J}   \sum_{j\in\mathcal U_{c(\tau)}}\mathbb{E}[-\eta^{\mathrm U}\nabla_{\boldsymbol u}^2 F({\boldsymbol e}_{\tau}, \boldsymbol u_\tau) \nonumber \\
    &+ L\|{\boldsymbol e}_{t+C} - {\boldsymbol e}_{\tau} \|\|\boldsymbol u_{\tau+1,c(\tau),j} - \boldsymbol u_\tau \| ]\nonumber \\
    \le& \frac{1}{J}   \sum_{j\in\mathcal U_{c(\tau)}}\mathbb{E}[-\eta^{\mathrm U}\nabla_{\boldsymbol u}^2 F({\boldsymbol e}_{\tau}, \boldsymbol u_\tau) \nonumber \\
    &+ \frac{L}{2}(\|{\boldsymbol e}_{t+C} - {\boldsymbol e}_{t} \|^2 + \|\boldsymbol u_{\tau+1,c(\tau),j} - \boldsymbol u_\tau \|^2) ]\nonumber \\
    \le& -\eta^{\mathrm U}\mathbb{E}[\nabla_{\boldsymbol u}^2 F({\boldsymbol e}_{\tau}, \boldsymbol u_\tau)] \nonumber \\
    &+ \frac{L}{2} C^3 (\eta^\mathrm{E}\gamma)^2 (G_E^2+\sigma_E^2) + \frac{L}{2}(\eta^\mathrm{U})^2(G_U^2+\sigma_U^2)
\end{align}
Moreover, since
\begin{align}
    \|\nabla_{\boldsymbol u} F({\boldsymbol e}_{\tau}, \boldsymbol u_\tau)\| \ge \|\nabla_{\boldsymbol u} F({\boldsymbol e}_{t}, \boldsymbol u_t)\| \!-\! L\|({\boldsymbol e}_{\tau},{\boldsymbol u}_{\tau}) - ({\boldsymbol e}_{t}, {\boldsymbol u}_{t}) \|,\nonumber
\end{align}
we further have
\begin{align}
    &\mathbb{E}[\nabla_{\boldsymbol u}^2 F({\boldsymbol e}_{\tau}, \boldsymbol u_\tau)] \nonumber\\
    \ge& \mathbb{E}[\nabla_{\boldsymbol u}^2 F({\boldsymbol e}_{t}, \boldsymbol u_t) - 2L G_U\|({\boldsymbol e}_{\tau},{\boldsymbol u}_{\tau}) - ({\boldsymbol e}_{t}, {\boldsymbol u}_{t}) \|  ]\nonumber\\
    \ge& \mathbb{E}[\nabla_{\boldsymbol u}^2 F({\boldsymbol e}_{t}, \boldsymbol u_t)] \!-\! 2\eta^\mathrm{U}LC[G_U^2\!+\!\sigma_U^2\!+\!(\!\frac{C\eta^\mathrm{E}\gamma}{\eta^\mathrm{U}}\!)^2(G_E^2\!+\!\sigma_E^2)].\nonumber
\end{align}
And the result in \eqref{eq:gate_step3} can be rewritten as
\begin{align}\label{eq:gate_step3-final}
    & \mathbb{E}\langle         \nabla_{\boldsymbol u} F({\boldsymbol e}_{t+C}, \boldsymbol u_\tau), 
        \boldsymbol u_{\tau+1} - \boldsymbol u_\tau    \rangle    \nonumber \\
    \le& -\eta^\mathrm{U} \mathbb{E}[\nabla_{\boldsymbol u}^2 F({\boldsymbol e}_{t}, \boldsymbol u_t)] \nonumber \\
    &+ \!2(\eta^\mathrm{E}\gamma)^2\! LC^3(G_E^2\!+\!\sigma_E^2)\!+\!\frac{5}{2}(\eta^\mathrm{U})^2\!LC(G_U^2\!+\!\sigma_U^2),
\end{align}
with $C\ge2$. Also, the second term can be bounded by
\begin{align}\label{eq:gate_step4}
    \frac{L}{2} \mathbb{E}\|\boldsymbol u_{\tau+1} - \boldsymbol u_\tau\|^2] \le \frac{L}{2}(\eta^\mathrm{U})^2(G_U^2+\sigma_U^2).
\end{align}
Therefore, by plugging \eqref{eq:gate_step3-final} and \eqref{eq:gate_step4} into \eqref{eq:loss-decomp-term2}, we have
\begin{align}\label{upper-bound-term2}
    &\mathbb{E}[F({\boldsymbol e}_{t+C},\boldsymbol u_{t+C})  - F({\boldsymbol e}_{t+C},\boldsymbol u_t)] \nonumber\\
    \le& -\eta^{\mathrm U}C\mathbb{E}[\nabla_{\boldsymbol u}^2 F({\boldsymbol e}_{t}, \boldsymbol u_t)] \nonumber\\
    &+\!2(\eta^\mathrm{E}\gamma)^2\! LC^4(G_E^2\!+\!\sigma_E^2)\!+\!3(\eta^\mathrm{U})^2LC^2(G_U^2\!+\!\sigma_U^2).
\end{align}

Combining the results in \eqref{upper-bound-term1} and \eqref{upper-bound-term2}, we finally obtain
\begin{align}
     &F(\boldsymbol e_{t+C},\boldsymbol u_{t+C})  - F(\boldsymbol e_t,\boldsymbol u_t) \nonumber\\
     \le& - \eta^{\mathrm E} \gamma C  \mathbb{E}[\nabla_{{\boldsymbol e}}^2 F({{\boldsymbol e}}_t,\boldsymbol u_t)] -\eta^{\mathrm U}C\mathbb{E}[\nabla_{\boldsymbol u}^2 F({\boldsymbol e}_{t}, \boldsymbol u_t)]\nonumber\\
     &+ \!5 (\eta^{\mathrm E} \gamma)^2 C^4\! L (G_E^2 \!+\! \sigma_E^2) \!+\! 5(\eta^{\mathrm U})^2\!LC^3(G_U^2 \!+\! \sigma_U^2),
\end{align}
for $C\ge2$. By summing the above results at rounds $t \in \{0, C, \ldots, (T-1)C\}$ with $\eta^{\mathrm E} \le \frac{\eta^{\mathrm U}}{\gamma} \le \frac{1}{\sqrt{T}}$, we further obtain
\begin{align}
    &\frac{1}{T}  \sum_{t=0,\,C,\,\ldots,\,(T-1)C}  \mathbb E\|\nabla F(\boldsymbol\theta_t)\|^2 \nonumber \\
    \le& \frac{\mathbb E[F(\boldsymbol\theta_0) \!-\! F^\star]}{\gamma C\sqrt{T}}  \!+\! \frac{5L C^2\gamma[ C (G_E^2 \!+\! \sigma_E^2) \!+\! G_U^2\!+\!\sigma_U^2]}{\sqrt{T}}.
\end{align}
Note that similar results can still be obtained with $\eta^{\mathrm E} = q\eta^{\mathrm U} \le \frac{1}{\sqrt{T}}$ where $q \le \frac{1}{\gamma}$.

\subsection{Proof of Theorem~\ref{thm:baseline}}\label{proof-theo2}
For the baseline scheme, the decomposition in \eqref{eq:loss-decomp-orbit} still holds. Since all the experts update simultaneously, the expert update part can be rewritten as
\begin{align}\label{eq:baseline-loss-decomp-term1}
    \!\!\!&F({\boldsymbol e}_{t+C},\boldsymbol u_t) - F({\boldsymbol e}_t,\boldsymbol u_t)
    \nonumber \\
    \!\!\!\le&  \langle \nabla_{{\boldsymbol e}} F({\boldsymbol e}_t,\boldsymbol u_t),
                {\boldsymbol e}_{t+C}-{\boldsymbol e}_t\rangle + \frac{L}{2}\|{\boldsymbol e}_{t+C}-{\boldsymbol e}_t\|^2
\end{align}
For the first term, we have
\begin{align}
\label{baseline-first-term-tau-sum}
&\mathbb E\langle \nabla_{\boldsymbol e}F(\boldsymbol e_t,\boldsymbol u_t),\boldsymbol e_{t+C}-\boldsymbol e_t\rangle \nonumber\\
=&\sum_{\tau=t}^{t+C-1}\mathbb E\langle \nabla_{\boldsymbol e}F(\boldsymbol e_t,\boldsymbol u_t),\boldsymbol e_{\tau+1}-\boldsymbol e_\tau\rangle.
\end{align}
The multiplication can be rewritten as
\begin{align}\label{baseline-mult-bound-compact-fix}
&\mathbb E\langle \nabla_{\boldsymbol e}F(\boldsymbol e_t,\boldsymbol u_t),
\boldsymbol e_{\tau+1}-\boldsymbol e_\tau\rangle \nonumber\\
=&\mathbb E[
\langle \nabla_{\boldsymbol e}F(\boldsymbol e_\tau,\boldsymbol u_\tau),\boldsymbol e_{\tau+1}-\boldsymbol e_\tau\rangle \nonumber\\
&+\langle \nabla_{\boldsymbol e}F(\boldsymbol e_t,\boldsymbol u_t)-\nabla_{\boldsymbol e}F(\boldsymbol e_\tau,\boldsymbol u_\tau),\boldsymbol e_{\tau+1}-\boldsymbol e_\tau\rangle
]\nonumber\\
\le&
\mathbb E\langle \nabla_{\boldsymbol e}F(\boldsymbol e_\tau,\boldsymbol u_\tau),\boldsymbol e_{\tau+1}-\boldsymbol e_\tau\rangle \nonumber\\
&+L\,\mathbb E[\|({\boldsymbol e}_{\tau}, {\boldsymbol u}_{\tau})-({\boldsymbol e}_{t},{\boldsymbol u}_{t})\|\,\|\boldsymbol e_{\tau+1}-\boldsymbol e_\tau\|]\nonumber\\
\le&
-\eta^{\mathrm E}\mathbb E\|\nabla_{\boldsymbol e}^2F(\boldsymbol e_\tau,\boldsymbol u_\tau)\| + (\eta^{\mathrm U})^2LC^2(G_U^2+\sigma_U^2) \nonumber\\
&+ (\eta^{\mathrm E})^2L(C^2+\frac12)(G_E^2 + \zeta_E^2 + \sigma_E^2).
\end{align}
We also have
\begin{align}
    &\mathbb{E}[\nabla_{\boldsymbol e}^2 F({\boldsymbol e}_{\tau}, \boldsymbol u_\tau)] \nonumber\\
    \ge& \mathbb{E}[\nabla_{\boldsymbol e}^2 F({\boldsymbol e}_{t}, \boldsymbol u_t) - 2L \|\nabla_{\boldsymbol e} F({\boldsymbol e}_{t}, \boldsymbol u_t)\|\|({\boldsymbol e}_{\tau},{\boldsymbol u}_{\tau}) \!-\! ({\boldsymbol e}_{t}, {\boldsymbol u}_{t}) \|  ]\nonumber\\
    \ge& \mathbb{E}[\nabla_{\boldsymbol e}^2 F({\boldsymbol e}_{t}, \boldsymbol u_t)] \nonumber\\
    &- 2\eta^\mathrm{E}LC[G_E^2 + \zeta_E^2 + \sigma_E^2+(\!\frac{\eta^\mathrm{U}}{\eta^\mathrm{E}})^2(G_U^2+\sigma_U^2)].
\end{align}
The the result in \eqref{baseline-first-term-tau-sum} is then upper-bounded by
\begin{align}
\label{baseline-first-term-result}
&\mathbb E\langle \nabla_{\boldsymbol e}F(\boldsymbol e_t,\boldsymbol u_t),\boldsymbol e_{t+C}-\boldsymbol e_t\rangle \nonumber\\
\le&
-\eta^{\mathrm E}C\mathbb E\|\nabla_{\boldsymbol e}^2F(\boldsymbol e_t,\boldsymbol u_t)\| + 2(\eta^{\mathrm U})^2LC^3(G_U^2+\sigma_U^2) \nonumber\\
&+ (\eta^{\mathrm E})^2 L C(2C^2+\frac{1}{2}) (G_E^2 + \zeta_E^2 + \sigma_E^2).
\end{align}
Moreover, the second term in \eqref{eq:baseline-loss-decomp-term1} is directly bounded by
\begin{align}\label{eq:baseline-step-bound-e}
\frac{L}{2}\mathbb E\|\boldsymbol e_{t+C}-\boldsymbol e_t\|^2
\le \frac{L}{2}(\eta^{\mathrm E})^2C^2(G_E^2 + \zeta_E^2 + \sigma_E^2).
\end{align}
Plugging \eqref{baseline-first-term-result} and \eqref{eq:baseline-step-bound-e} into \eqref{eq:baseline-loss-decomp-term1} gives the upper bound of the expert update part as
\begin{align}\label{eq:baseline-term1-final}
&\mathbb E\!\left[F(\boldsymbol e_{t+C},\boldsymbol u_t)-F(\boldsymbol e_t,\boldsymbol u_t)\right] \nonumber\\
\le&
-\eta^{\mathrm E}C\,\mathbb E\|\nabla_{\boldsymbol e}F(\boldsymbol e_t,\boldsymbol u_t)\|^2 + 3(\eta^{\mathrm U})^2L\,C^{3}(G_U^2+\sigma_U^2)
\nonumber\\
&+ 3(\eta^{\mathrm E})^2L\,C^{3}
(G_E^2+\zeta_E^2+ \sigma_E^2),
\end{align}

For the gate update part in \eqref{eq:loss-decomp-orbit}, we have
\begin{align}\label{eq:baseline-loss-decomp-term2}
    \!\!\!&F({\boldsymbol e}_{t+C},\boldsymbol u_{t+C})     - F({\boldsymbol e}_{t+C},\boldsymbol u_t)    \nonumber \\
    \!\!\!=& \!\!\!  \sum_{\tau=t}^{t+C-1}    [ F({\boldsymbol e}_{t+C},\boldsymbol u_{\tau+1})     - F({\boldsymbol e}_{t+C},\boldsymbol u_{\tau})]\nonumber\\
    \!\!\!\le&  \!\!\!\sum_{\tau=t}^{t+C-1}  \!\!  [  \langle       \nabla_{\!\boldsymbol u} F({\boldsymbol e}_{t+C}, \boldsymbol u_\tau),\,
       \boldsymbol u_{\tau+1} \!\!-\! \boldsymbol u_\tau
    \rangle     \!+\! \frac{L}{2}\! \|\boldsymbol u_{\tau+1} \!\!-\! \boldsymbol u_\tau\|^2].
\end{align}
Let $c(\tau)$ denote the device cluster in the connected phase in round $\tau$, the first term can be written as
\begin{align}\label{eq:baseline-gate_step2}
    & \mathbb{E}\langle         \nabla_{\boldsymbol u} F({\boldsymbol e}_{t+C}, \boldsymbol u_\tau),
        \boldsymbol u_{\tau+1} - \boldsymbol u_\tau    \rangle    \nonumber \\
    =&    \frac{1}{J}   \sum_{j\in\mathcal U_{c(\tau)}} \mathbb{E}  
    \langle     \nabla_{\boldsymbol u} F({\boldsymbol e}_{t+C}, \boldsymbol u_\tau),
        -\eta^{\mathrm U}\mathbf g^{\mathrm U}_{\tau,c(\tau),j}    \rangle, \nonumber\\
    =& \frac{1}{J}   \sum_{j\in\mathcal U_{c(\tau)}}    \mathbb{E}[\langle\nabla_{\boldsymbol u} F({\boldsymbol e}_{\tau}, \boldsymbol u_\tau),    -\eta^{\mathrm U}\mathbf g^{\mathrm U}_{\tau,c(\tau),j}\rangle \nonumber\\
    &+  \langle \nabla_{\boldsymbol u} F({\boldsymbol e}_{t+C}, \boldsymbol u_\tau)-\nabla_{\boldsymbol u} F({\boldsymbol e}_{\tau}, \boldsymbol u_\tau), -\eta^{\mathrm U}\mathbf g^{\mathrm U}_{\tau,c(\tau),j} \rangle]\nonumber\\
    \le& \frac{1}{J}   \sum_{j\in\mathcal U_{c(\tau)}}\mathbb{E}[-\eta^{\mathrm U}\nabla_{\boldsymbol u}^2 F({\boldsymbol e}_{\tau}, \boldsymbol u_\tau) \nonumber \\
    &+ L\|{\boldsymbol e}_{t+C} - {\boldsymbol e}_{\tau} \|\|\boldsymbol u_{\tau+1,c(\tau),j} - \boldsymbol u_\tau \| ]\nonumber \\
    \le& -\eta^{\mathrm U}\mathbb{E}[\nabla_{\boldsymbol u}^2 F({\boldsymbol e}_{\tau}, \boldsymbol u_\tau)] \nonumber \\
    &+ \frac{L}{2} C^2 (\eta^\mathrm{E})^2 (G_E^2 \!+\! \zeta_E^2 \!+\! \sigma_E^2) \!+\! \frac{L}{2}(\eta^\mathrm{U})^2(G_U^2\!+\!\sigma_U^2)
\end{align}
We also have
\begin{align}
    &\mathbb{E}[\nabla_{\boldsymbol u}^2 F({\boldsymbol e}_{\tau}, \boldsymbol u_\tau)] \nonumber\\
    \ge& \mathbb{E}[\nabla_{\boldsymbol u}^2 F({\boldsymbol e}_{t}, \boldsymbol u_t) - 2L \|\nabla_{\boldsymbol u} F({\boldsymbol e}_{t}, \boldsymbol u_t)\|\|{\boldsymbol e}_{\tau} \!-\! {\boldsymbol e}_{t} \!+\! {\boldsymbol u}_{\tau} \!-\! {\boldsymbol u}_{t} \|  ]\nonumber\\
    \ge& \mathbb{E}[\nabla_{\boldsymbol u}^2\! F({\boldsymbol e}_{t}, \boldsymbol u_t)] \!-\! 2\eta^{\!\mathrm{U}}\! LC[(\!\frac{\eta^\mathrm{E}}{\eta^\mathrm{U}}\!)^2(\!G_E^2 \!+\! \zeta_E^2 \!+\! \sigma_E^2\!)\!+\!G_U^2\!+\!\sigma_U^2].\nonumber
\end{align}
Moreover, the second term in \eqref{eq:baseline-loss-decomp-term2} can be bounded by
\begin{align}\label{eq:baseline-gate_step3}
    \frac{L}{2} \mathbb{E}\|\boldsymbol u_{\tau+1} - \boldsymbol u_\tau\|^2] \le \frac{L}{2}(\eta^\mathrm{U})^2(G_U^2+\sigma_U^2).
\end{align}
Therefore, the gate update part can be bounded as
\begin{align}\label{eq:baseline-term2-final}
&\mathbb E\!\left[F(\boldsymbol e_{t+C},\boldsymbol u_{t+C})-F(\boldsymbol e_{t+C},\boldsymbol u_t)\right] \nonumber\\
\le&
-\eta^{\mathrm U}C\,\mathbb E\!\left[\nabla_{\boldsymbol u}^2F(\boldsymbol e_t,\boldsymbol u_t)\right]
+3 (\eta^{\mathrm U})^2LC^2(G_U^2+\sigma_U^2) \nonumber\\
&+2(\eta^{\mathrm E})^2LC^{3}(G_E^2+\zeta_E^2+\sigma_E^2).
\end{align}

Combining the results in  \eqref{eq:baseline-term1-final} and \eqref{eq:baseline-term2-final}, we finally obtain
\begin{align}
&\mathbb E\!\left[F(\boldsymbol e_{t+C},\boldsymbol u_{t+C})-F(\boldsymbol e_t,\boldsymbol u_t)\right] \nonumber\\
\le\;&
-\eta^{\mathrm E}C\,\mathbb E\|\nabla_{\boldsymbol e}F(\boldsymbol e_t,\boldsymbol u_t)\|^2
-\eta^{\mathrm U}C\,\mathbb E\!\left[\nabla_{\boldsymbol u}^2F(\boldsymbol e_t,\boldsymbol u_t)\right]
\nonumber\\
&+ 5(\eta^{\mathrm E})^2L\,C^{3}
(G_E^2+\zeta_E^2+\sigma_E^2) \nonumber\\
&+ 5(\eta^{\mathrm U})^2L\,C^{3}(G_U^2+\sigma_U^2).
\end{align}
for $C\ge2$. By summing the above results at rounds $t \in \{0, C, \ldots, (T-1)C\}$ with $\eta^{\mathrm E} = \frac{\eta^{\mathrm U}}{\gamma} = \frac{1}{\sqrt{T}}$, the result in \eqref{eq:emsfl-theorem} can be directly obtained.

\bibliographystyle{IEEEtran}
\bibliography{refs}

\begin{thebibliography}{10}
\providecommand{\url}[1]{#1}
\csname url@samestyle\endcsname
\providecommand{\newblock}{\relax}
\providecommand{\bibinfo}[2]{#2}
\providecommand{\BIBentrySTDinterwordspacing}{\spaceskip=0pt\relax}
\providecommand{\BIBentryALTinterwordstretchfactor}{4}
\providecommand{\BIBentryALTinterwordspacing}{\spaceskip=\fontdimen2\font plus
\BIBentryALTinterwordstretchfactor\fontdimen3\font minus \fontdimen4\font\relax}
\providecommand{\BIBforeignlanguage}[2]{{%
\expandafter\ifx\csname l@#1\endcsname\relax
\typeout{** WARNING: IEEEtran.bst: No hyphenation pattern has been}%
\typeout{** loaded for the language `#1'. Using the pattern for}%
\typeout{** the default language instead.}%
\else
\language=\csname l@#1\endcsname
\fi
#2}}
\providecommand{\BIBdecl}{\relax}
\BIBdecl

\bibitem{Mao2017MEC}
Y.~Mao, C.~You, J.~Zhang, K.~Huang, and K.~B. Letaief, ``A survey on mobile edge computing: The communication perspective,'' \emph{IEEE Commun. Surveys Tuts.}, vol.~19, no.~4, pp. 2322--2358, 4th Quart 2017.

\bibitem{Letaief2022EdgeAI6G}
K.~B. Letaief, Y.~Shi, J.~Lu, and J.~Lu, ``Edge artificial intelligence for {6G}: Vision, enabling technologies, and applications,'' \emph{IEEE J. Sel. Areas Commun.}, vol.~40, no.~1, pp. 5--36, Jan. 2022.

\bibitem{Brown2020GPT3}
T.~Brown, B.~Mann, N.~Ryder, M.~Subbiah, J.~D. Kaplan \emph{et~al.}, ``Language models are few-shot learners,'' in \emph{Adv. Neural Inf. Process. Syst. (NeurIPS)}, vol.~33, May 2020, pp. 1877--1901.

\bibitem{OpenAI2023GPT4}
{OpenAI}, ``{GPT-4 Technical Report},'' \emph{arXiv preprint arXiv:2303.08774}, Mar. 2023.

\bibitem{Chowdhery2023PaLM}
A.~Chowdhery, S.~Narang, J.~Devlin \emph{et~al.}, ``{PaLM}: Scaling language modeling with pathways,'' \emph{J. Mach. Learn. Res.}, vol.~24, no.~1, Jan. 2023.

\bibitem{Meta2024Llama3}
{Llama Team, AI @ Meta}, ``The {Llama} 3 herd of models,'' \emph{arXiv preprint arXiv:2407.21783}, Jul. 2024.

\bibitem{Chu2025Llama3ISCA}
W.~Chu, X.~Xie \emph{et~al.}, ``Scaling {Llama 3} training with efficient parallelism strategies,'' in \emph{Proc. Annu. Int. Symp. Comput. Archit. (ISCA)}, Jun. 2025, pp. 1703--1716.

\bibitem{McMahan2017FedAvg}
B.~McMahan, E.~Moore, D.~Ramage, S.~Hampson, and B.~A.~y. Arcas, ``Communication-efficient learning of deep networks from decentralized data,'' in \emph{Proc. Int. Conf. Artif. Intell. Stat. (AISTATS)}, vol.~54, Apr. 2017, pp. 1273--1282.

\bibitem{Chen2021JointLearningCommFL}
M.~Chen, Z.~Yang, W.~Saad, C.~Yin, H.~V. Poor, and S.~Cui, ``A joint learning and communications framework for federated learning over wireless networks,'' \emph{IEEE Trans. Wireless Commun.}, vol.~20, no.~1, pp. 269--283, Jan. 2021.

\bibitem{Zhu2021OneBitOTA}
G.~Zhu, Y.~Du, D.~G{\"u}nd{\"u}z, and K.~Huang, ``One-bit over-the-air aggregation for communication-efficient federated edge learning: Design and convergence analysis,'' \emph{IEEE Trans. Wireless Commun.}, vol.~20, no.~3, pp. 2120--2135, Mar. 2021.

\bibitem{Fedus2022Switch}
W.~Fedus, B.~Zoph, and N.~Shazeer, ``Switch transformers: Scaling to trillion parameter models with simple and efficient sparsity,'' \emph{J. Mach. Learn. Res.}, vol.~23, no. 120, pp. 1--39, Jan. 2022.

\bibitem{Du2022GLaM}
N.~Du, Y.~Huang, A.~M. Dai \emph{et~al.}, ``{GLaM}: Efficient scaling of language models with mixture-of-experts,'' in \emph{Proc. Int. Conf. Mach. Learn. (ICML)}, vol. 162, Jul. 2022, pp. 5547--5569.

\bibitem{Shazeer2017MoE}
N.~Shazeer, A.~Mirhoseini, K.~Maziarz, A.~Davis, Q.~V. Le, G.~Hinton, and J.~Dean, ``Outrageously large neural networks: The sparsely-gated mixture-of-experts layer,'' in \emph{Proc. Int. Conf. Learn. Represent. (ICLR)}, Apr. 2017.

\bibitem{Xue2024WDMoE}
N.~Xue, Y.~Sun, Z.~Chen, M.~Tao, X.~Xu, L.~Qian, S.~Cui, W.~Zhang, and P.~Zhang, ``{WDMoE}: Wireless distributed mixture of experts for large language models,'' \emph{IEEE Trans. Wireless Commun.}, vol.~25, pp. 559--572, 2026.

\bibitem{Chen2025SlimCaching}
Q.~Chen, X.~Chen, and K.~Huang, ``{SlimCaching}: Edge caching of mixture-of-experts for distributed inference,'' \emph{arXiv preprint arXiv:2507.06567}, Jul. 2025.

\bibitem{You2023HierPersonalizedTWC}
C.~You, K.~Guo, H.~H. Yang, and T.~Q.~S. Quek, ``Hierarchical personalized federated learning over massive mobile edge computing networks,'' \emph{IEEE Trans. Wireless Commun.}, vol.~22, no.~11, pp. 8141--8157, Nov. 2023.

\bibitem{Aygun2024HierClusteringTWC}
O.~Ayg{\"u}n, M.~Kazemi, D.~G{\"u}nd{\"u}z, and T.~M. Duman, ``Over-the-air federated edge learning with hierarchical clustering,'' \emph{IEEE Trans. Wireless Commun.}, vol.~23, no.~12, pp. 17\,856--17\,871, Dec. 2024.

\bibitem{Ntontin2025Space6GProcIEEE}
K.~Ntontin, E.~Lagunas, J.~Querol, J.~u. Rehman, J.~Grotz, S.~Chatzinotas, and B.~Ottersten, ``A vision, survey, and roadmap toward space communications in the 6g and beyond era,'' \emph{Proc. IEEE}, vol. 113, no.~9, pp. 987--1023, Sept. 2025.

\bibitem{Le2025RandomAccessCOMST}
T.~T.~T. Le, N.~U. Hassan, X.~Chen, M.-S. Alouini, Z.~Han, and C.~Yuen, ``A survey on random access protocols in direct-access {LEO} satellite-based {IoT} communication,'' \emph{IEEE Commun. Surveys Tuts.}, vol.~27, no.~1, pp. 426--462, Feb. 2025.

\bibitem{Matthiesen2024SatFLNetwork}
B.~Matthiesen, N.~Razmi, I.~Leyva-Mayorga, A.~Dekorsy, and P.~Popovski, ``Federated learning in satellite constellations,'' \emph{IEEE Netw.}, vol.~38, no.~2, pp. 232--239, Mar. 2024.

\bibitem{Razmi2024OnboardISLTCOM}
N.~Razmi, B.~Matthiesen, A.~Dekorsy, and P.~Popovski, ``On-board federated learning for satellite clusters with inter-satellite links,'' \emph{IEEE Trans. Commun.}, vol.~72, no.~6, pp. 3408--3424, Jun. 2024.

\bibitem{Fang2023OliveBranchTWC}
Q.~Fang, Z.~Zhai, S.~Yu, Q.~Wu, X.~Gong, and X.~Chen, ``Olive branch learning: A topology-aware federated learning framework for space-air-ground integrated network,'' \emph{IEEE Trans. Wireless Commun.}, vol.~22, no.~7, pp. 4534--4551, Jul. 2023.

\bibitem{Shi2024SatFEEL}
Y.~Shi, L.~Zeng, J.~Zhu, Y.~Zhou, C.~Jiang, and K.~B. Letaief, ``Satellite federated edge learning: Architecture design and convergence analysis,'' \emph{IEEE Trans. Wireless Commun.}, vol.~23, no.~10, pp. 15\,212--15\,229, Oct. 2024.

\bibitem{Zhai2024FedLEO}
Z.~Zhai, Q.~Wu, S.~Yu, R.~Li, F.~Zhang, and X.~Chen, ``{FedLEO}: An offloading-assisted decentralized federated learning framework for low earth orbit satellite networks,'' \emph{IEEE Trans. Mobile Comput.}, vol.~23, no.~5, pp. 5260--5279, May 2024.

\bibitem{Huang2024HFL_SAGIN}
C.~Huang, G.~Chen, P.~Xiao, J.~A. Chambers, and W.~Huang, ``Fair resource allocation for hierarchical federated edge learning in space-air-ground integrated networks via deep reinforcement learning with hybrid control,'' \emph{IEEE J. Sel. Areas Commun.}, vol.~42, no.~12, pp. 3618--3631, Dec. 2024.

\bibitem{Guo2020PFLMoE}
B.~Guo, Y.~Mei, D.~Xiao, and W.~Wu, ``{PFL-MoE}: Personalized federated learning based on mixture of experts,'' in \emph{Web and Big Data (APWeb-WAIM 2021)}, Aug. 2021, pp. 480--486.

\bibitem{Lepikhin2021GShard}
D.~Lepikhin, H.~Lee, Y.~Xu, D.~Chen, O.~Firat, Y.~Huang, M.~Krikun, N.~Shazeer, and Z.~Chen, ``{GShard}: Scaling giant models with conditional computation and automatic sharding,'' in \emph{Proc. Int. Conf. Learn. Represent. (ICLR)}, May 2021.

\bibitem{Cai2025SurveyMoE}
W.~Cai, J.~Jiang, F.~Wang, J.~Tang, S.~Kim, and J.~Huang, ``A survey on mixture of experts in large language models,'' \emph{IEEE Trans. Knowl. Data Eng.}, vol.~37, no.~7, pp. 3896--3915, Jul. 2025.

\bibitem{Qin2025OptimalExpertSelectionDMoE}
S.~Qin, H.~Wu, H.~Du, and K.~Huang, ``Optimal expert selection for distributed mixture-of-experts at the wireless edge,'' \emph{arXiv preprint arXiv:2503.13421}, 2025.

\bibitem{Hu2022LoRA}
E.~J. Hu, Y.~Shen, P.~Wallis, Z.~Allen-Zhu, Y.~Li, S.~Wang, L.~Wang, W.~Chen \emph{et~al.}, ``{LoRA}: Low-rank adaptation of large language models,'' in \emph{Proc. Int. Conf. Learn. Represent. (ICLR)}, vol.~1, no.~2, 2022, p.~3.

\bibitem{Zoph2022STMoE}
B.~Zoph, I.~Bello, S.~Kumar, N.~Du, Y.~Huang, J.~Dean, N.~Shazeer, and W.~Fedus, ``{ST-MoE}: Designing stable and transferable sparse expert models,'' \emph{arXiv preprint arXiv:2202.08906}, Feb. 2022.

\bibitem{meta_llama3_8b_instruct_2024}
{Meta AI}, ``Meta-llama-3-8b-instruct,'' [Online]. Available: \url{https://huggingface.co/meta-llama/Meta-Llama-3-8B-Instruct}, Apr. 2024.

\bibitem{wang_llama3_8b_chinese_chat_2024}
S.~Wang, Y.~Zheng, G.~Wang, S.~Song, and G.~Huang, ``{Llama3-8B-Chinese-Chat},'' [Online]. Available: \url{https://huggingface.co/shenzhi-wang/Llama3-8B-Chinese-Chat}, Mar. 2024.

\bibitem{pal_llama3_openbiollm_8b_2024}
A.~Pal, ``{Llama3-OpenBioLLM-8B},'' [Online]. Available: \url{https://huggingface.co/aaditya/Llama3-OpenBioLLM-8B}, Apr. 2024.

\bibitem{hendrycks2021measuring}
D.~Hendrycks, C.~Burns, S.~Basart, A.~Zou, M.~Mazeika, D.~Song, and J.~Steinhardt, ``Measuring massive multitask language understanding,'' in \emph{Proc. Int. Conf. Learn. Represent. (ICLR)}, May 2021.

\bibitem{liu2023cmexam}
J.~Liu, P.~Zhou, Y.~Hua, D.~Chong, Z.~Tian, A.~Liu, H.~Wang, C.~You, Z.~Guo, L.~Zhu, and M.~L. Li, ``Benchmarking large language models on {CMExam} - a comprehensive chinese medical exam dataset,'' in \emph{Adv. Neural Inf. Process. Syst. (NeurIPS)}, Dec. 2023.

\bibitem{pal2022medmcqa}
A.~Pal, L.~K. Umapathi, and M.~Sankarasubbu, ``{MedMCQA}: A large-scale multi-subject multi-choice dataset for medical domain question answering,'' in \emph{Proc. Conf. Health Inference Learn. (CHIL)}, vol. 174, Apr. 2022, pp. 248--260.

\bibitem{Komatsuzaki2023SparseUpcycling}
A.~Komatsuzaki, J.~Puigcerver, J.~Lee-Thorp, C.~Riquelme~Ruiz, B.~Mustafa, J.~Ainslie, Y.~Tay, M.~Dehghani, and N.~Houlsby, ``Sparse upcycling: Training mixture-of-experts from dense checkpoints,'' in \emph{Proc. Int. Conf. Learn. Represent. (ICLR)}, May 2023.

\bibitem{wang2019glue}
A.~Wang, A.~Singh, J.~Michael, F.~Hill, O.~Levy, and S.~R. Bowman, ``{GLUE}: A multi-task benchmark and analysis platform for natural language understanding,'' in \emph{Proc. Int. Conf. Learn. Represent. (ICLR)}, May 2019.

\end{thebibliography}

\end{document}